\let\lncsproof\proof \let\lncsendproof\endproof \let\lncsqed\qed
\let\proof\relax\let\endproof\relax
\let\proof\lncsproof \let\endproof\lncsendproof \let\qed\lncsqed
\newcommand\ode[2]{\langle #1 \mathop{\&} #2\rangle}
\newcommand\hoare[3]{\{#1\}#2\{#3\}}
\newcommand\inv[3]{\llbracket #1 \rrbracket \langle #2 \rangle \llbracket #3\rrbracket}
\newcommand\vc{\mathrm{vc}}
\newcommand\pre{\mathrm{pre}}
\newcommand\ichoice{\mathop{\texttt{++}}}
\newcommand\skipcmd{\mathtt{skip}}
\newcommand\itecmd[3]{\mathtt{if}\ #1\allowbreak\ \mathtt{then}\ #2\allowbreak\ \mathtt{else}\ #3}
\newcommand\loopcmd[1]{#1*}
\newcommand\dL{$\mathsf{d}\mathcal{L}$}
\newcommand\vctag[1]{\textsf{(#1)}}
\newcommand\labeq{$_=$}
\newcommand\labneq{$_{\neq}$}
\newcommand\labineq{$_{\succcurlyeq}$}
\lstdefinelanguage{HHLPy}{%
  keywords={},
  sensitive=true,
  basicstyle=\scriptsize,
  morecomment=[l]{\#},
  commentstyle=\color{vgreen},
  morekeywords = [2]{pre,post,if,else,else if,function,invariant},
  keywordstyle = [2]{\bf\color{black}},
  morekeywords = [3]{solution, di, dbx, bc},
  keywordstyle = [3]{\color{purple}},
  numbers=left,
  numberstyle=\tiny,
  breaklines=true,
  frame=single
}
\newtheorem*{thm-vcg-sound}{Theorem~\ref{thm:vcg-sound}}
\begin{document}
%
%
\title{\textsf{HHLPy}: Practical Verification of Hybrid Systems using Hoare Logic
}
%
%
\author{Huanhuan Sheng\inst{1,2}
\and
Alexander Bentkamp\inst{1}
\and
Bohua Zhan\inst{1,2}
}
\authorrunning{H. Sheng, A. Bentkamp et al.}
%
\institute{State Key Laboratory of Computer Science, Institute of Software, Chinese Academy of Sciences, Beijing, China\\
\email{\{shenghh,bentkamp,bzhan\}@ios.ac.cn}
\and
University of Chinese Academy of Sciences, Beijing, China
}
\maketitle              
\SetWatermarkAngle{0}
\SetWatermarkText{\raisebox{12.5cm}{%
   \hspace{0cm}%
   \href{https://doi.org/10.5281/zenodo.7419103}{\includegraphics{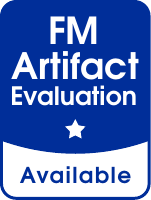}}%
   \hspace{11cm}%
   \includegraphics{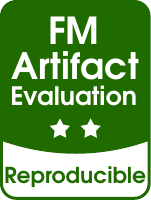}%
}}
\begin{abstract}
We present a tool for verification of hybrid systems expressed in the sequential fragment of HCSP (Hybrid Communicating Sequential Processes). The tool permits annotating HCSP programs with pre- and postconditions, invariants, and proof rules for reasoning about ordinary differential equations. Verification conditions are generated from the annotations following the rules of a Hoare logic for hybrid systems. We designed labeling and highlighting mechanisms to distinguish and visualize different verification conditions. The tool is implemented in Python and has a web-based user interface. We evaluated the effectiveness of the tool on translations of Simulink/Stateflow models and on KeYmaera X benchmarks.

\keywords{Hybrid systems \and Hoare logic \and Formal verification.}
\end{abstract}
%
%
%
\section{Introduction}

Hybrid systems refer to systems that have both continuous and discrete behaviors. They occur in diverse areas of science and engineering, ranging from transportation and spaceflight, to robots and medical devices. Hence, verifying that hybrid systems meet certain specifications is an important problem. Apart from methods such as monitoring and model checking, theorem proving is one of the major approaches to verifying hybrid systems.

There is a substantial amount of previous work on verification of hybrid systems based on theorem proving. One major framework is Platzer's differential dynamic logic (\dL) \cite{Platzer08,Platzer18}, and the associated KeYmaera/KeYmaera X prover~\cite{PlatzerQ08, Fulton-2015-KeYmaera}. Recently, a Hoare logic has been introduced for \dL\ and implemented within the Isabelle proof assistant~\cite{MuniveS18}. We review these works in detail in Section~\ref{sec:related-work} of this paper.

Another approach is to model hybrid systems using HCSP (Hybrid CSP) \cite{He94, ChaochenJR95}, an extension of CSP (Communicating Sequential Processes) to include continuous evolution. Its semantics of continuous evolution is deterministic, so it can be used naturally for capturing Simulink/Stateflow models.
A hybrid Hoare logic has been developed for HCSP, and is implemented in Isabelle~\cite{WangZZ15}. However, practical application of the tool is complicated by its steep learning curve.
To use this tool, the user need to be familiar with the Isabelle proof assistant, as well as manually applying a set of Hoare logic rules which are themselves very complex.
This is in stark contrast to KeYmaera X, which allows users to verify hybrid programs by choosing menu actions and offers highly specialized automation.

This paper introduces \textsf{HHLPy}\footnote{The tool is available at \url{https://github.com/bzhan/mars/tree/master/hhlpy}}, a tool for verification of the sequential part of HCSP with a friendly graphical user interface.
Compared to~\cite{WangZZ15}, we simplify the Hoare logic rules, and add more rules for reasoning about the behavior of differential equations. 
These latter rules are closely related to that in \dL, but due to the semantic differences of HCSP, we adapted some of the rules and proved our rules to be sound (Section~\ref{sec:proof-rules}).

Our Hoare logic rules are in a sufficiently simple form that automatic verification condition generation is possible. We design such a procedure to compute verification conditions (VCs) from a given annotated HCSP program (Section~\ref{sec:verif}). We express VCs as a set of conditions, splitting up VCs that are conjunctions as much as possible. We use labels to distinguish between different VCs, so that users can choose solvers (currently either Z3 \cite{MouraB08} or Wolfram Engine \cite{WolframEngine}) for each VC individually and such choices are maintained through minor changes on the code (Section~\ref{sec:labels}).

To visualize to the user where each VC originates from,
a highlighting mechanism highlights the set of code fragments in the annotated program that contributed to generating the VC (Section~\ref{sec:highlighting}).

We implemented the tool using Python and JavaScript
and evaluated it on Simulink/Stateflow models and on KeYmaera X benchmarks (Section~\ref{sec:evaluation}). 
We translated two Simulink/Stateflow models using
the toolchain developed by Zou et al.\cite{Zou-2013-verify, Zou-2015-formal} and verified them in our tool.
Due to differences in the semantics of \dL\ and HCSP, we translated each KeYmaera X benchmark by hand, trying to maintain semantic equivalence as much as possible. 
In this way, we succeeded to use our tool to solve most of the verification problems in the basic and nonlinear KeYmaera X benchmarks.



\vspace{-1mm}
\section{Preliminaries}
\label{sec:preliminaries}

In this section, we present the sequential fragment of HCSP, with an informal explanation of its semantics. We further give an overview of the existing toolchain on translation of Simulink/Stateflow models into HCSP.

\vspace{-1mm}
\subsection{Sequential fragment of HCSP}

Hybrid CSP (HCSP), introduced in~\cite{He94,ChaochenJR95}, is an extension of Hoare's Communicating Sequential Processes to include continuous evolution. It can model communicating processes running in parallel, where each process may have both continuous and discrete behavior. In this paper, we focus on the sequential fragment of HCSP, consisting of the following commands:
\[
\begin{array}{lll}
 S,T  & ::= & 
 \skipcmd \mid
 x := e \mid
 x := *\>(B) \mid
 S; T \mid 
 \itecmd{B}{S}{T} \mid
 S \ichoice T \mid
 \loopcmd{S}\\
 &&\mid 
 \ode{\dot{\boldsymbol{x}}=\boldsymbol{e}}{D}
 \end{array}
\]
 The program state is a mapping from variables to reals. $\skipcmd$ leaves the state unchanged. $x := e$ assigns the value of expression $e$ to variable $x$. $x := *\>(B)$ is nondeterministic assignment of some value satisfying condition $B$ to $x$. $S; T$ and $\itecmd{B}{S}{T}$ are regular sequential composition and conditional. $S \ichoice T$ is a nondeterministic choice between $S$ and $T$. $\loopcmd{S}$ runs $S$ a nondeterministic number of times (including zero).

The ordinary differential equation (ODE) command $\ode{\dot{\boldsymbol{x}}=\boldsymbol{e}}{D}$
specifies continuous evolution in HCSP. It makes the vector of variables $\boldsymbol{x}$ evolve according to ODE $\dot{\boldsymbol{x}}=\boldsymbol{e}$ until the domain $D$ becomes false. If $D$ is false from the start, the ODE is skipped. In contrast to \dL, where continuous evolution may stop at any point within the specified domain, in HCSP it always deterministically continues up to the boundary. In this paper, we assume $D$ is given by a polynomial inequality of the form $p(\boldsymbol{x})<0$, so it represents an open set in $\mathbb{R}^n$.

We assume in this paper that all expressions appearing in an HCSP program (as well as in annotations to be discussed later) are polynomials, and hence continuity conditions are trivially satisfied.

For a formal treatment of semantics of HCSP (including communication and parallel composition), we refer to Zhan et al.~\cite[Chapter 6]{zhan-2017-formal}.

\subsection{Translation from Simulink/Stateflow}

The HCSP language is located at the center of a toolchain that also includes translation from Simulink/Stateflow models, simulation and code generation~\cite{ChenHTWYZZZ17}. The original translation algorithms from Simulink~\cite{Zou-2013-verify} and Stateflow~\cite{Zou-2015-formal} produce HCSP programs that involve communication between parallel processes. However, more recent methods by Xu et al.~\cite{SimulinkTranslate} and Guo et al.~\cite{StateflowTranslate} produce sequential HCSP programs. We use these translation methods for verification of Simulink/Stateflow models in Section~\ref{sec:evaluation}.

\section{Proof Rules of Hoare Logic for Hybrid Systems}
\label{sec:proof-rules}

In this section, we present the Hoare logic that forms the basis of our verification tool. The Hoare triple for partial correctness, written as $\hoare{P}{c}{Q}$, means starting from a state satisfying assertion $P$, any terminating execution of $c$ reaches a state satisfying assertion $Q$. The Hoare rules for ordinary commands (except ODEs) are standard and are presented in the Appendix \ref{sec:more-proof-rules}.

Hence, we focus on the Hoare rules for ODEs. These rules are mostly adapted from rules for \dL, as given in~\cite{Platzer18,PlatzerT20}. Due to the difference in semantics between HCSP and \dL, several of the rules take on different forms. We do not aim to present a minimal set of rules, instead providing users a wide range of choices.

\subsection{Proof rules based on invariants}

In order to state proof rules based on invariants of ODEs, we require an additional kind of judgments, called \emph{invariant triples}.

\begin{definition}[Invariant Triple]
Let $P$ and $Q$ be predicates on the
variables of an ODE $\dot{\boldsymbol{x}}=\boldsymbol{e}$.
Let
$\boldsymbol{\gamma}:[0,T]\to \mathbb{R}^n$
be a solution of the ODE
such that $\boldsymbol{\gamma}(t)$ satisfies $P$ for all $t\in[0,T]$ and such that $\boldsymbol{\gamma}(0)$ satisfies $Q$.
If for all such solutions $\boldsymbol{\gamma}$,
$\boldsymbol{\gamma}(t)$ satisfies $Q$ for all $t\in[0,T]$, then we say that $Q$ is an invariant of ODE $\dot{\boldsymbol{x}}=\boldsymbol{e}$ under domain $P$, written as
    \[\inv{P}{\dot{\boldsymbol{x}}=\boldsymbol{e}}{Q}\]
\end{definition}

\subsubsection{Differential Weakening}

The differential weakening rule (dW) reduces a Hoare triple goal to an invariant triple, incorporating the domain condition.
\[
\prftree[r]{dW}{\inv{\overline{D}}{\dot{\boldsymbol{x}}=\boldsymbol{e}}{I}}{\partial D\wedge I\rightarrow Q}
{\hoare{(D\rightarrow I)\wedge (\neg D\rightarrow Q)}{\ode{\dot{\boldsymbol{x}}=\boldsymbol{e}}{D}}{Q}}
\]
Here, $\overline{D}$ is the closure of $D$, and $\partial D$ is the boundary set of $D$. Note that the rule is in the form that allows us to derive a precondition from any postcondition. The precondition $(D\rightarrow I)\wedge (\neg D\rightarrow Q)$
corresponds to the two cases for the state before ODE: if the state satisfies domain $D$, then it should satisfy the invariant. Otherwise it should satisfy the postcondition $Q$ directly. 
Two special cases of the rule, for $I$ set to $\mathsf{true}$ and $\mathsf{false}$, provide further intuition. They correspond to cases where no invariant is needed, and where the starting state is known to satisfy $\neg D$.
\[
\begin{array}{c}
\prftree[r]{dWT}{\partial D\rightarrow Q}
{\hoare{\neg D\rightarrow Q}{\ode{\dot{\boldsymbol{x}}=\boldsymbol{e}}{D}}{Q}}
\quad
\prftree[r]{dWF}{\hoare{\neg D\wedge Q}{\ode{\dot{\boldsymbol{x}}=\boldsymbol{e}}{D}}{Q}}
\end{array}
\]

\begin{proof}[of the (dW) rule]
Given starting state $\boldsymbol{x}$, we divide into two cases based on whether $\boldsymbol{x}$ satisfies domain $D$.
If $\boldsymbol{x}$ satisfies $D$, then there exists a solution $\boldsymbol{\gamma}:[0,T]\to\mathbb{R}^n$, such that $\boldsymbol{\gamma}(t)$ satisfies $D$ for $t\in[0,T)$ and $\boldsymbol{\gamma}(T)$ satisfies $\neg D$, and we wish to show that $\boldsymbol{\gamma}(T)$ satisfies $Q$. By the continuity of $\boldsymbol{\gamma}$, we get that $\boldsymbol{\gamma}(t)$ satisfies $\overline{D}$ for $t\in[0,T]$. Moreover, since $D\rightarrow I$ holds in the precondition, we get that $\boldsymbol{\gamma}(0)$ satisfies $I$ as well. Then from $\inv{\overline{D}}{\dot{\boldsymbol{x}}=\boldsymbol{e}}{I}$, we get that $\boldsymbol{\gamma}(t)$ satisfies $I$ for $t\in [0,T]$. From $\partial D\wedge I\rightarrow Q$ and the fact that $\boldsymbol{\gamma}(T)$ satisfies $I$ and $\partial D$, we get that $\boldsymbol{\gamma}(T)$ satisfies $Q$, as desired.

If $\boldsymbol{x}$ does not satisfy $D$, then the ODE is not executed, and we wish to show that $\boldsymbol{x}$ satisfies $Q$. Since $\neg D\rightarrow Q$ holds in the precondition, we get that $\boldsymbol{x}$ satisfies $Q$, as desired.
\end{proof}


\subsubsection{Differential Invariant}

The differential invariant rule (dI) is essentially the same as that in \dL. It concludes invariants from computation of Lie derivatives.
\[
\prftree[r]{dI\labeq}{P\rightarrow \dot{f}=0}{\inv{P}{\dot{\boldsymbol{x}}=\boldsymbol{e}}{f=0}}
\]
Here $\dot{f}$ denotes the Lie derivative of $f$ under the differential equation $\dot{\boldsymbol{x}}=\boldsymbol{e}$.
The corresponding rules for inequality and disequality are as follows, where $\succcurlyeq$ denotes either $>$ or $\ge$.
\[
\begin{array}{c}
\prftree[r]{dI\labineq}{P\rightarrow \dot{f}\ge 0}{\inv{P}{\dot{\boldsymbol{x}}=\boldsymbol{e}}{f\succcurlyeq 0}} \quad
\prftree[r]{dI\labneq}{P\rightarrow \dot{f}=0}{\inv{P}{\dot{\boldsymbol{x}}=\boldsymbol{e}}{f\neq0}}
\end{array}
\]

\vspace{-5mm}
\subsubsection{Differential Cut}

The differential cut rule (dC) inserts an intermediate invariant to be proved, and afterwards permits the use of this invariant to show further invariants. In contrast to \dL, it is not possible to record previously proved invariants as conjuncts in the domain of ODE commands. Instead we place them in the premise of the invariant triple. Indeed this is the primary motivation for introducing the concept of invariant triples.
\[
\prftree[r]{dC}{\inv{P}{\dot{\boldsymbol{x}}=\boldsymbol{e}}{Q_1}}{\inv{P\wedge Q_1}{\dot{\boldsymbol{x}}=\boldsymbol{e}}{Q_2}}{\inv{P}{\dot{\boldsymbol{x}}=\boldsymbol{e}}{Q_1\wedge Q_2}}
\]

The (dC) rule can be used multiple times to show conjunction of more than two invariants. For example, if we wish to show three invariants $Q_1,Q_2,Q_3$ in that order, first apply the (dC) rule with $Q_1$ and $Q_2$ to obtain $\inv{P}{\dot{\boldsymbol{x}}=\boldsymbol{e}}{Q_1\wedge Q_2}$, then apply the (dC) rule again to obtain the conclusion.

\vspace{-5mm}
\subsubsection{Differential Ghost}

The differential ghost rule (dG) adds new variables satisfying some differential equations to help prove the Hoare triple of the original differential equations.
\[
\prftree[r]{dG}{\inv{\overline{D}}{\dot{\boldsymbol{x}}=\boldsymbol{e}, \dot{\boldsymbol{y}}=\boldsymbol{f}(\boldsymbol{x},\boldsymbol{y})}{I}}{\partial D \wedge I \rightarrow Q}{\hoare{(D\rightarrow \exists \boldsymbol{y}.\,I)\wedge (\neg D\rightarrow Q)}{\ode{\dot{\boldsymbol{x}}=\boldsymbol{e}}{D}}{Q}}
\]
Here, $\boldsymbol{y}$ are fresh variables that do not occur in $\ode{\dot{\boldsymbol{x}}=\boldsymbol{e}}{D}$ or $Q$, and $\boldsymbol{f}(\boldsymbol{x},\boldsymbol{y})$ satisfies the Lipschitz condition.

\vspace{-5mm}
\subsubsection{Barrier Certificate}

The barrier certificate rule (bc) concludes invariants from the definition of barrier certificate.
\[
\prftree[r]{bc}{P\wedge f=0 \rightarrow \dot{f}>0}
{\inv{P}{\dot{\boldsymbol{x}}=\boldsymbol{e}}{f\succcurlyeq 0}}
\]

\vspace{-5mm}
\subsubsection{Darboux}
The Darboux rule (dbx) exploits properties of Darboux invariants, which are inspired by Darboux polynomials. Darboux equality and inequality rules are as follows.
\[
\begin{array}{c}
\prftree[r]{dbx\labeq}{P \rightarrow \dot{f}=gf}{\inv{P}{\dot{\boldsymbol{x}}=\boldsymbol{e}}{f=0}} \quad
\prftree[r]{dbx\labineq}{P \rightarrow \dot{f} \ge gf}{\inv{P}{\dot{\boldsymbol{x}}=\boldsymbol{e}}{f \succcurlyeq 0}}
\end{array}
\]

\vspace{-5mm}
\subsection{Solution Rule}

The solution rule offers another way to conclude Hoare triples directly, independent of using the (dW) or (dG) rule followed by proving invariants. In the rule below, $\boldsymbol{e}$ is linear in $\boldsymbol{x}$, and $\boldsymbol{u}(t,\boldsymbol{x})$ is the unique solution to the differential equation $\dot{\boldsymbol{x}}=\boldsymbol{e}$ with symbolic initial value $\boldsymbol{x}$ (that is, $\frac{d\boldsymbol{u}(t,\boldsymbol{x})}{dt}=\boldsymbol{e}(\boldsymbol{u}(t,\boldsymbol{x}))$ and $\boldsymbol{u}(0,\boldsymbol{x})=\boldsymbol{x}$). Let $P'(\boldsymbol{x})$ denote the following predicate on the starting state $\boldsymbol{x}$:
\[ \forall t>0.\, (\forall 0\le\tau<t.\, D(\boldsymbol{u}(\tau,\boldsymbol{x})))\wedge \neg D(\boldsymbol{u}(t,\boldsymbol{x})) \rightarrow Q(\boldsymbol{u}(t,\boldsymbol{x})).\]
The solution rule for Hoare triples (sln) is:
\[
\prftree[r]{sln}{}{
\hoare{(D\rightarrow P')\wedge (\neg D\rightarrow Q)}{\ode{\dot{\boldsymbol{x}}=\boldsymbol{e}}{D}}{Q}
}
\]

\begin{rep}
The version for invariant judgements (slnI) is:
\[
\prftree[r]{slnI}{}{\inv{\overline{D}, \forall t > 0.\, (\forall 0 \leq \tau \le t.\, \overline{D}(u(\tau,x))) \rightarrow Q(u(t,x))}
{\dot{x}=e}
{Q}}
\]
\end{rep}




\section{Verification Condition Generation} \label{sec:verif}

The VC generation procedure operates on annotated sequential HCSP programs. For ODEs, there are two kinds of annotations: ghost variable (\textsf{gvar}) and invariant annotations (\textsf{ode\_inv}):
\[
\begin{array}{rll}
\mathsf{gvar} &::=&  \mbox{ghost } z\ (\dot{z}=f(\boldsymbol{x},z)) \\
\mathsf{ode\_inv} &::=& [I]~|~[I]\ \{\mbox{dbx } g\}~|~[I]\ \{\mbox{bc}\}
\end{array}
\]
Here, `$\mbox{ghost } z\ (\dot{z}=f(\boldsymbol{x},z))$' denotes a ghost variable $z$ following the ODE $\dot{z}=f(\boldsymbol{x},z)$, where $f$ must be linear in $z$ to ensure global Lipschitz condition. The annotation $[I]$ denotes showing invariant $I$ using the (dI) rule. The annotation $[I]\ \{\mbox{dbx } g\}$ denotes showing an invariant using the (dbx) rule, with $g$ being the optional cofactor. The annotation $[I]\ \{\mbox{bc}\}$ denotes using the (bc) rule.

The syntax for annotated sequential HCSP programs is:
\[
\begin{array}{lll}
\mathcal{S},\mathcal{T} &::=& \skipcmd~|~
x := e~|~
x := * (B) ~|~
\mathcal{S}; \mathcal{T} ~|~
\itecmd{B}{\mathcal{S}}{\mathcal{T}} ~|~ \\
&& \mathcal{S} \ichoice \mathcal{T} ~|~
\loopcmd{\mathcal{S}} \mbox{ invariant } [I_1] \dots [I_n] ~|~ \\
&& \ode{\dot{\boldsymbol{x}}=\boldsymbol{e}}{D} \mbox{ invariant } \mathsf{gvar}_1\dots\mathsf{gvar}_k,\mathsf{ode\_inv}_1 \dots \mathsf{ode\_inv}_n ~|~ \\
&& \ode{\dot{\boldsymbol{x}}=\boldsymbol{e}}{D} \mbox { solution }
\end{array}
\]
The only addition to the syntax of HCSP is that each loop is followed by a list of invariants $I_1,\dots,I_n$, and each ODE is either followed by a list of ghost variable declarations and a list of invariant annotations, each of which specify an invariant to be proved using one of (dI), (dbx), or (bc) rules, or followed by the annotation ``solution'' to indicate that the (sln) rule is to be used.

To generate the necessary VCs for a given Hoare triple, we devised a procedure using weakest preconditions \cite{dijkstra-1975-guarded, dijkstra-1976-discipline}.
To be able to refer to preconditions and VCs individually,
we consider sets of conditions instead of composing predicates by $\land$.

Given a Hoare triple $\hoare{P_1\land \dots \land P_m}{\mathcal{S}}{Q_1\land\dots\land Q_n}$ to verify,
we define the set of all VCs to be
\[
\begin{array}{llr}
&\mathrm{VC}(\hoare{P_1\land \dots\land P_m}{\mathcal{S}}{Q_1\land\dots\land Q_n})=\\
&\quad\{P_1 \land \cdots \land P_m \rightarrow R \mid R \in \pre(\mathcal{S},\{Q_1,\dots,Q_n\})\} \cup{} &\qquad\qquad\qquad\qquad\ \ \vctag{pre} \\
&\quad\{\tilde{P}_1 \land \cdots \land \tilde{P}_{\tilde{m}} \rightarrow R
\mid R \in \vc(\mathcal{S}, \{Q_1,\dots,Q_n\}) \} 
&\vctag{vc}
\end{array}
\]
where $\tilde{P}_1, \ldots, \tilde{P}_{\tilde{m}}$ is the subset of 
the preconditions $P_1, \ldots, P_m$ whose variables are
never reassigned in $\mathcal{S}$,
and the functions $\pre$ and $\vc$ are defined below.

Given an annotated program $\mathcal{S}$ and
a set $\{Q_1, \dots, Q_n\}$ of postconditions,
we denote the set of derived
preconditions as $\pre(\mathcal{S}, \{Q_1, \dots, Q_n\})$,
defined as follows.
%
%
\[
\begin{array}{lr}
    \pre(\mathcal{S}, \{Q_1,\dots,Q_n\}) =
    \pre(\mathcal{S},Q_1) \cup \dots \cup \pre(\mathcal{S},Q_n) 
    &\vctag{pre-multi}
    \\
    \pre(\skipcmd, Q) = 
    Q
    &\vctag{pre-skip}
    \\
    \pre(x := e, Q) = Q[e/x]
    &\vctag{pre-assn}
    \\
    \pre(\mathcal{S}; \mathcal{T}, Q) = \pre(\mathcal{S}, \pre(\mathcal{T}, Q)) 
    &\vctag{pre-seq}
    \\
    \pre(\mathtt{if}\ B_1\ \mathtt{then}\ \mathcal{S}_1\ \mathtt{else}\ \cdots\ \mathtt{if}\ B_{n-1}\ \mathtt{then}\ \mathcal{S}_{n-1}\ \mathtt{else}\ \mathcal{S}_n, Q) = 
    \\
    \quad\{\neg (B_1 \lor \cdots \lor B_{i-1})\land B_i \to P  \mid P\in \pre(\mathcal{S}_i, Q), 1 \leq i \leq n-1\} ~ \cup 
    &\vctag{pre-if}
    \\
    \quad\{\neg (B_1 \lor \cdots \lor B_{n-1})\to P
    \mid P\in \pre(\mathcal{S}_n, Q)\}
    &\vctag{pre-else}
    \\
    \pre(\mathcal{S}_1 \ichoice \cdots \ichoice \mathcal{S}_n, Q) = 
    \pre(\mathcal{S}_1, Q) \cup \cdots \cup \pre(\mathcal{S}_n, Q)
    &\vctag{pre-choice}
    \\
    \pre(x := *\>(B), Q) = 
     B[y/x] \rightarrow Q[y/x]
      \text{ for a fresh variable $y$}
    &\vctag{pre-nassn}
    \\
    \pre(\loopcmd{\mathcal{S}} \mbox{ invariant } [I_1]\dots [I_n], Q) = 
    \{I_j~|~ 1\le j\le n\}
    &\vctag{pre-loop}
    \\
    \pre(\ode{\dot{\boldsymbol{x}}=\boldsymbol{e}}{D} \mbox{ invariant } \mathsf{gvar}_1\dots\mathsf{gvar}_k,\mathsf{ode\_inv}_1\dots\mathsf{ode\_inv}_n, Q) = \\
    \quad P_{\mathrm{skip}} \cup P_{\mathrm{init}} \\
    \pre(\ode{\dot{\boldsymbol{x}}=\boldsymbol{e}}{D} \mbox{ solution}) = P_{\mathrm{skip}} \cup P_{\mathrm{sln}}
\end{array}
\]
where
\[
\begin{array}{llr}
    & P_{\mathrm{skip}} = \{\neg D \rightarrow Q\}
    & \qquad\qquad\qquad\vctag{pre-dWG-skip} \\
    & P_{\mathrm{init}} = \{D \rightarrow \exists z_1, \ldots, z_k.\> I_1 \land \cdots \land I_n \} 
    \quad\text{if $k>0$}
    & \vctag{pre-dG-init} \\
    & P_{\mathrm{init}} = \{D \rightarrow I_j, \mid 1 \le j \le n \} \text{ otherwise}
    & \vctag{pre-dW-init} \\
    & P_{\mathrm{sln}} = \{D \rightarrow (\forall t>0.\, (\forall 0\le\tau<t.\, D(\boldsymbol{u}(\tau,\boldsymbol{x})))\wedge {} \\
    & \phantom{P_{\mathrm{sln}} = \{} \neg D(\boldsymbol{u}(t,\boldsymbol{x}))\rightarrow Q(\boldsymbol{u}(t,\boldsymbol{x}))) \}
    & \vctag{pre-sln}
\end{array}
\]
where $z_1,\dots,z_k$ are the ghost variables provided in $\mathsf{gvar}_1\dots\mathsf{gvar}_k$, and $I_1, \ldots, I_n$ are the invariants provided in $\mathsf{ode\_inv}_1,\dots,\mathsf{ode\_inv}_n$.
If the user chooses the (sln) rule, we verify that $\boldsymbol{e}$ is linear in $\boldsymbol{x}$ and compute the unique solution $\boldsymbol{u}(\tau,\boldsymbol{x})$ to the ODE with symbolic initial value $\boldsymbol{x}$.




Given an annotated program $\mathcal{S}$ and
a set $\{Q_1 \dots, Q_n\}$ of postconditions,
we denote the set of internal VCs as $\vc(\mathcal{S}, \{Q_1, \dots, Q_n\})$, defined as follows.
\[
\begin{array}{lr}
\vc(\mathcal{S}, \{Q_1,\dots,Q_n\}) =
\vc(\mathcal{S},Q_1) \cup \dots \cup \vc(\mathcal{S},Q_n)
& \vctag{vc-multi}
\\
\vc(\skipcmd, Q) = \emptyset
& \vctag{vc-skip}
\\
\vc(x := e, Q) = \emptyset
& \vctag{vc-assn}
\\
\vc(\mathcal{S};\mathcal{T}, Q) = \vc(\mathcal{S}, \pre(\mathcal{T}, Q)) \cup \vc(\mathcal{T}, Q)
& \vctag{vc-seq}
\\
\multicolumn{2}{l}{\vc(\mathtt{if}\ B_1\ \mathtt{then}\ \mathcal{S}_1\ \mathtt{else\ if}\ \cdots\ \mathtt{else\ if}\ B_{n-1}\ \mathtt{then}\ \mathcal{S}_{n-1}\ \mathtt{else}\ \mathcal{S}_n, Q) =}\\
\quad\vc(\mathcal{S}_1,Q) \cup \dots \cup \vc(\mathcal{S}_n,Q)
& \vctag{vc-ite}
\\
\vc(\mathcal{S}_1 \ichoice \cdots \ichoice \mathcal{S}_n, Q) = \vc(\mathcal{S}_1,Q) \cup \dots \cup \vc(\mathcal{S}_n,Q)
& \vctag{vc-choice}
\\
\vc(x := *\>(B), Q) = \emptyset
& \vctag{vc-nassn}\\
\vc(\loopcmd{\mathcal{S}} \mbox{ invariant } [I_1]\dots[I_n], Q) = \\
\quad\vc(\mathcal{S}, \{I_1,\dots,I_n\}) \cup {}
& \vctag{vc-loop-body}
\\
\quad \{(I_1 \land \cdots \land I_n) \rightarrow Q\} \cup {}
& \vctag{vc-loop-exit}
\\
\quad \{(I_1 \land \cdots \land I_n) \rightarrow P
\mid P \in \pre(\mathcal{S}, \{I_1,\dots,I_n\}) \}
& \vctag{vc-loop-maintain}\\
\multicolumn{2}{c}{\vc(\ode{\dot{\boldsymbol{x}}=\boldsymbol{e}}{D} \mbox{ invariant } \mathsf{gvar}_1\dots \mathsf{gvar}_m, \mathsf{ode\_inv}_1\dots\mathsf{ode\_inv}_n, Q) = C_\mathrm{exec} \cup C_\mathrm{dC}} \\
\vc(\ode{\dot{\boldsymbol{x}}=\boldsymbol{e}}{D} \mbox{ solution}, Q) = \emptyset
\end{array}
\]
where we set $C_\mathrm{exec} = \emptyset$ if the only invariant is $\mathsf{false}$,
or else
\[
\begin{array}{llr}
& C_\mathrm{exec} = \{I_1 \land \dots \land I_n \land \partial D \rightarrow Q\}
&\vctag{vc-dWG-exec} \\
& C_\mathrm{dC} = \{ I_1 \land \cdots \land I_{j-1} \to R \mid R \in \vc(\ode{\dot{\boldsymbol{x}}=\boldsymbol{e}}{D}, \mathsf{ode\_inv}_j, Q), \\
& \qquad\qquad\qquad\qquad\qquad\qquad\quad 1 \le j \le n\}
& \vctag{vc-dC}
\end{array}
\]
Here, $I_1, \dots, I_n$ are the invariants
provided in $\mathsf{ode\_inv}_1,\dots,\mathsf{ode\_inv}_n$. If no invariants are specified, we set 
a single invariant $I_1 = \mathsf{true}$ by default.
We write
$\vc(\ode{\dot{\boldsymbol{x}}=\boldsymbol{e}}{D}, \mathsf{ode\_inv}_j, Q)$ for the VC generated from annotation $\mathsf{ode\_inv}_j$, defined as follows.
\[
\begin{array}{llr}
&\vc(\ode{\dot{\boldsymbol{x}}=\boldsymbol{e}}{D}, [\mathsf{true}], Q) = \emptyset
 & \vctag{vc-true}
\\
&\vc(\ode{\dot{\boldsymbol{x}}=\boldsymbol{e}}{D}, [\mathsf{false}], Q) = \emptyset
& \vctag{vc-false}
\\
&\vc(\ode{\dot{\boldsymbol{x}}=\boldsymbol{e}}{D}, [f = 0], Q) = \{\overline{D}\rightarrow \dot{f} = 0\}
& \vctag{vc-dI1}
\\
&\vc(\ode{\dot{\boldsymbol{x}}=\boldsymbol{e}}{D}, [f \succcurlyeq 0], Q) = \{\overline{D}\rightarrow \dot{f} \geq 0\}
& \vctag{vc-dI2}
\\
&\vc(\ode{\dot{\boldsymbol{x}}=\boldsymbol{e}}{D}, [f \ne 0], Q) = \{\overline{D}\rightarrow \dot{f} = 0\}
& \vctag{vc-dI3}
\\
&\vc(\ode{\dot{\boldsymbol{x}}=\boldsymbol{e}}{D}, [f=0]\ \{\mbox{dbx } g\}, Q) = \{\overline{D}\rightarrow \dot{f}=gf\}
& \vctag{vc-dbx1}
\\
&\vc(\ode{\dot{\boldsymbol{x}}=\boldsymbol{e}}{D}, [f\succcurlyeq 0]\ \{\mbox{dbx } g\}, Q) =
\{\overline{D}\rightarrow \dot{f}\ge gf\}
& \vctag{vc-dbx2}
\\
&\vc(\ode{\dot{\boldsymbol{x}}=\boldsymbol{e}}{D}, [f\succcurlyeq 0]\ \{\mbox{bc}\}, Q) = \{\overline{D}\wedge f=0\rightarrow \dot{f}>0 \}
 & \qquad\qquad\qquad \vctag{vc-bc}
\end{array}
\]
All Lie derivatives are computed with respect to
$\dot{\boldsymbol{x}}=\boldsymbol{e}$ and the equations given in
$\mathsf{gvar}_1\dots \mathsf{gvar}_m$.
For the (dbx) rule, if no cofactor $g$ is provided, we attempt to compute the cofactor automatically. Specifically, in the case of an equality invariant, this reduces to simplifying $\dot{f} / f$ into polynomial form. In the case of an inequality invariant, we attempt to find a polynomial quotient of $\dot{f}$ and $f$ with a non-negative remainder. 

\newcommand\thmvcgsound{A Hoare triple $\hoare{P_1\wedge \dots\wedge P_m}{\mathcal{T}}{Q_1\wedge\dots\wedge Q_n}$
holds if all conditions in
$\mathrm{VC}(\hoare{P_1\wedge \dots\wedge P_m}{\mathcal{T}}{Q_1\wedge \dots\wedge Q_n})$
hold.}

\begin{theorem} \label{thm:vcg-sound}
\thmvcgsound
\end{theorem}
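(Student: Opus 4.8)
The plan is to reduce the theorem to a single structural induction on the annotated program, after strengthening the statement so that the inductive hypothesis can be threaded through subprograms. The one nonstandard feature is that the internal conditions in the \vctag{vc} family are guarded by $\tilde{P}_1\wedge\dots\wedge\tilde{P}_{\tilde{m}}$ rather than being required to hold unconditionally, so I first record why this extra context is sound: since each $\tilde{P}_i$ mentions only variables that $\mathcal{T}$ never reassigns, its truth is preserved by every execution. I would isolate this as a \emph{frame lemma}: if the free variables of a predicate $H$ are disjoint from the variables that $\mathcal{S}$ may assign (including the continuously evolving variables $\boldsymbol{x}$ of an ODE and any ghost variables), then $\hoare{P}{\mathcal{S}}{Q}$ implies $\hoare{P\wedge H}{\mathcal{S}}{Q\wedge H}$. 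This follows directly from the semantics, since unmodified variables retain their initial values in every reachable state.

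The core of the argument is the following strengthened lemma, proved by structural induction on $\mathcal{S}$: for every postcondition set $\{Q_1,\dots,Q_n\}$ and every predicate $H$ whose free variables are not reassigned by $\mathcal{S}$, if $H\rightarrow R$ is valid for all $R\in\vc(\mathcal{S},\{Q_1,\dots,Q_n\})$, then $\hoare{H\wedge\bigwedge_{R\in\pre(\mathcal{S},\{Q_1,\dots,Q_n\})}R}{\mathcal{S}}{Q_1\wedge\dots\wedge Q_n}$ holds. Given the lemma, the theorem follows by taking $H=\tilde{P}_1\wedge\dots\wedge\tilde{P}_{\tilde{m}}$: the \vctag{vc} conditions are exactly the validity of $H\rightarrow R$, the \vctag{pre} conditions give $P_1\wedge\dots\wedge P_m\rightarrow\bigwedge_R R$, and $P_1\wedge\dots\wedge P_m\rightarrow H$ since each $\tilde{P}_i$ is one of the $P_j$; the rule of consequence then yields the desired triple. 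Throughout, I would use \vctag{pre-multi}, \vctag{vc-multi} and the postcondition-conjunction rule of partial-correctness Hoare logic to pass freely between a set of postconditions and their conjunction, so that each inductive case can be checked one postcondition at a time (while still invoking the hypothesis at the set-valued postconditions $\pre(\mathcal{T},Q)$ and $\{I_1,\dots,I_n\}$ that arise in the sequential and loop cases).

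For the first-order constructs the induction is routine weakest-precondition bookkeeping: $\skipcmd$ and assignment are immediate; sequential composition uses the frame lemma to carry $H$ into the postcondition of the first component so that the two triples compose along $\bigwedge\pre(\mathcal{T},Q)$; the conditional and nondeterministic choice distribute over the branches with the accumulated guards; the nondeterministic assignment uses freshness of $y$ to read \vctag{pre-nassn} as a universally quantified precondition. The loop case uses $\mathcal{I}\wedge H$ with $\mathcal{I}=I_1\wedge\dots\wedge I_n$ as the loop invariant: \vctag{vc-loop-maintain} under $H$ gives maintenance, \vctag{vc-loop-exit} gives the exit implication, and the frame lemma reinstates $H$ after each pass through the body.

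The main obstacle is the ODE case, where the verification conditions must be reassembled into the proof rules of Section~\ref{sec:proof-rules}. For the invariant annotation I would first turn each per-annotation condition in $\vc(\ode{\dot{\boldsymbol{x}}=\boldsymbol{e}}{D},\mathsf{ode\_inv}_j,Q)$ into an invariant triple $\inv{\overline{D}\wedge H\wedge I_1\wedge\dots\wedge I_{j-1}}{\dot{\boldsymbol{x}}=\boldsymbol{e}}{I_j}$ by matching \vctag{vc-dI1}--\vctag{vc-bc} against (dI), (dbx) and (bc); the extra hypotheses $I_1\wedge\dots\wedge I_{j-1}$ come from the guards in $C_{\mathrm{dC}}$ (\vctag{vc-dC}), and $H$ is admissible in the domain because it is preserved along the flow. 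Iterating (dC) exactly as described after its statement collapses these into $\inv{\overline{D}\wedge H}{\dot{\boldsymbol{x}}=\boldsymbol{e}}{I_1\wedge\dots\wedge I_n}$. Capping with (dW) when $k=0$ and with (dG) when $k>0$, using $C_{\mathrm{exec}}$ (\vctag{vc-dWG-exec}) as the boundary condition $\partial D\wedge I\rightarrow Q$ and $P_{\mathrm{init}}$, $P_{\mathrm{skip}}$ as precisely the rule's precondition, produces the required triple; the degenerate annotations $[\mathsf{true}]$ and $[\mathsf{false}]$ reproduce (dWT) and (dWF). The point that needs care is re-examining the (dW)/(dG) soundness arguments with $H$ carried along: because the continuous evolution leaves the variables of $H$ fixed, $H$ holds along the entire solution $\boldsymbol{\gamma}$, so it may legitimately be treated as a standing hypothesis in both the invariant triple and the boundary condition. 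Finally, the solution annotation is immediate, since $\vc$ is empty and $P_{\mathrm{skip}}\cup P_{\mathrm{sln}}$ (\vctag{pre-sln}) is literally the precondition of the (sln) rule, so the triple follows from soundness of (sln) together with the frame lemma.
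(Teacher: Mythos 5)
Your proof is correct, and its skeleton coincides with the paper's own argument: structural induction on the program, routine weakest-precondition reasoning for the discrete constructs, the conjunction $I_1\wedge\dots\wedge I_n$ as loop invariant, and, for ODEs, assembling the per-annotation conditions into invariant triples via (dI)/(dbx)/(bc), collapsing them with repeated (dC), capping with (dW) or (dG) using \vctag{pre-dWG-skip}, \vctag{pre-dW-init}/\vctag{pre-dG-init} and \vctag{vc-dWG-exec}, and dispatching the solution annotation directly through \vctag{pre-sln}. Where you genuinely depart from the paper is in how the guards $\tilde{P}_1,\dots,\tilde{P}_{\tilde{m}}$ are treated. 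The paper handles them informally: it observes that their variables never change and then declares that, for the purposes of the proof, one may assume the elements of $\vc(\mathcal{T},\{Q_1,\dots,Q_n\})$ hold outright. You instead prove a frame lemma and thread an explicit guard $H$ through a strengthened induction statement. This buys two things the paper leaves implicit. First, it makes the induction itself well-formed: the theorem as stated attaches different guards to the VCs of a subprogram than to those of the enclosing program, so a literal induction on the theorem statement does not apply to subprograms; your $H$-parameterized lemma is the statement the paper is really inducting on. Second, ``assume the vc elements always hold'' conflates truth in all reachable states with universal validity, and the distinction matters in the ODE case, where \vctag{vc-dI1}--\vctag{vc-bc} serve as premises of (dI), (dbx) and (bc), rules that need their hypotheses for \emph{all} states in the closed domain, not just reachable ones. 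Your fix --- placing $H$ into the domain of the invariant triples, noting that $H$ is constant along the flow, and re-checking that (dW)/(dG) remain sound with $H$ as a standing hypothesis --- is exactly what is needed to close that gap, on which the paper's proof is silent. The cost is only length: the paper's version is shorter, while yours must state and prove the frame lemma and the mildly generalized (dW)/(dG) applications.
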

\begin{proof}
We give the full proof in Appendix~\ref{sec:proof-vcg}. 
In short,
we proceed by structural induction on $\mathcal{T}$.
The difficult case is when $\mathcal{T}$ is an ODE. If the ODE is annotated to use the solution rule, we use the VC stemming from the precondition \vctag{pre-sln}. Otherwise, we employ the (dG) rule or the (dW) rule depending on if ghost variables are specified. The VCs stemming from \vctag{pre-dWG-skip} and \vctag{pre-dW-init} or \vctag{pre-dG-init} show that the rule (dW) or (dG) is applicable. The condition
\vctag{vc-dWG-exec} 
discharges the right premise of the (dW) or (dG) rule.

For the left premise
$\inv{\overline{D}}{\dot{\boldsymbol{x}}=\boldsymbol{e},
\dot{\boldsymbol{y}}=\boldsymbol{f}(\boldsymbol{x},\boldsymbol{y})}{I_1\land\cdots \land I_k}$,
(without $\boldsymbol{y}$ if using the (dW) rule), we repeatedly apply the (dC) rule to isolate each invariant $I_i$.
For each step
$\inv{\overline{D} \land I_1 \land \dots \land I_{i-1}}{\dot{\boldsymbol{x}}=\boldsymbol{e},
\dot{\boldsymbol{y}}=\boldsymbol{f}(\boldsymbol{x},\boldsymbol{y})}{I_i}$, depending on the rule specified in the annotation, we apply the (dI) rule, (dbx) rule or (bc) rule, using the corresponding VCs as premises.
\qed
\end{proof}

\section{Labels}
\label{sec:labels}

VCs generated by the procedure in Section~\ref{sec:verif} will be proved using Z3 or Wolfram Engine. In this section, we introduce a labeling mechanism to store which solver is used for each VC, in a way that is robust to minor modifications of the program or its annotations.

As indicated in Section~\ref{sec:verif}, the generation of a VC starts from a postcondition or invariant and proceeds bottom up through the program. We call the postcondition or invariant at the beginning of this process the \emph{conclusion assertion} of the VC. We associate each VC to its conclusion assertion. Labels are used to distinguish between multiple VCs from the same conclusion assertion. They can arise for the following reasons:
\begin{itemize}
    \item Loop and ODE invariants produce VCs for showing that they initially hold and for showing that they are maintained by the loop or ODE.
    \item If-then-else and nondeterministic choice produce multiple preconditions, at least one for each branch.\footnote{A large program consisting of multiple if-then-else commands can lead to an undesired blow up of the number of VCs. For example, a program constructed by ten if-then-else commands as sequential components results in more than $2^{10}$ VCs.}
    \item Each ODE produces preconditions for both the case when the domain $D$ holds initially, and for when $D$ does not hold.
\end{itemize}

A label consists of two parts: a category label and a branch label. The category label is either empty or one of ``init'', ``maintain'', ``init\_all''. The branch label is a list, separated by ``.'', of either ``skip'', ``exec'', or $n(b)$ where $n$ is a positive integer and $b$ is a branch label itself. We write $n$ instead of $n()$ when the inner branch label is empty.






\subsubsection{Category Labels}

Category labels use ``init" (``init\_all") and ``maintain" to distinguish between VCs with loop or ODE invariants as conclusion assertions. For loops, the VCs for showing the invariant holds initially are labeled ``init", and the VCs that result from showing the invariant is maintained by the loop are labeled ``maintain'' (when there are nested loops or ODEs in the loop body, multiple VCs are computed in the loop body, this applies only to those with the invariant as conclusion assertion).

For an ODE, the VC coming from \vctag{pre-dW-init} (resp. \vctag{pre-dG-init}) are labeled ``init'' (resp. ``init\_all''). The VCs coming from \vctag{vc-dC}, for showing each invariant is maintained during evolution, are labeled ``maintain''.

The category label is empty in all other cases.

\begin{rep}
two sorts of VCs with invariants as their conclusion assertions, which are used in loops and ODEs. One kind of VC is that the invariants are satisfied at the initial state, which is labeled ``init"(``init\_all"). The other is that the invariants are maintained after executing the loop body or during the evolution of ODE, which is labeled ``maintain".

Specifically, for a loop $S*$ with invariants $I_1, \dots, I_n$, the precondition $I_i$
\vctag{pre-loop} is labeled with ``init" for each invariant $I_i$. The VC $I_1 \land \cdots \land I_n \rightarrow P$ for each $P \in\pre(S, I_i)$ \vctag{vc-loop-maintain} is labeled ``maintain" if $\vc(S, I_i) = \emptyset$. Otherwise, the VC in $\vc(S, I_i)$\vctag{vc-loop-body} with $I_i$ as its conclusion assertion will be labeled ``maintain" ($\vc(S, I_i) \neq \emptyset$ when $S$ contains loop or ODE commands). For example, S is a sequence command in the form of `$x:=e; L*$', where $L*$ is a loop with invariant $F$. To prove that $I_i$ is maintained after executing $S$, we have to show that $I_1 \land \cdots \land I_n \rightarrow F$, $F$ is maintained after executing $L$ (in $\vc(S, I_i)$) and $F\rightarrow I_i$ (in $\vc(S, I_i)$), in which $F\rightarrow I_i$ should be labeled ``maintain" for the invariant $I_i$.

For an ODE $\ode{\dot{\boldsymbol{x}}=\boldsymbol{e}}{D}$ with invariants $I_1, \dots, I_n$, if no ghost variables are introduced, the precondition $D \rightarrow I_i$ \vctag{pre-dW-init} is labeled ``init" for each invariant $I_i$. If ghost variables $y$ are introduced, the precondition will be $D \rightarrow \exists y.\,I$ \vctag{pre-dG-init}, which is labeled ``init\_all". The VC \vctag{vc-dC} generated for the invariant triples are labeled ``maintain" for each invariant $I_i$, which indicates that $I_i$ is maintained during the evolution of $\ode{\dot{\boldsymbol{x}}=\boldsymbol{e}}{D}$.

Other VCs or preconditions, if they are generated from 
\vctag{pre-loop}, \vctag{pre-dW-init} or \vctag{pre-dG-init}, will inherit the category labels from the precondition
they originally stem from. Otherwise, they do not have category labels by default.
\end{rep}

\subsubsection{Branch Labels}

Branch labels help to distinguish VCs generated by executing different branches of programs. 

The positive integer $n$ handles branches created by `$\itecmd{B}{\mathcal{S}_1}{\mathcal{S}_2}$' or `$\mathcal{S}_1 \ichoice \mathcal{S}_2$'. Each value of $n$ (starting from 1) corresponds to one branch. Sequence labels $b.b$ are used for sequences of such commands. For example, the branches for `$\mathcal{S}_1 \ichoice \mathcal{S}_2; \mathcal{S}_3 \ichoice \mathcal{S}_4$' have labels $1.1$, $1.2$, $2.1$ and $2.2$. Nested labels $n(b)$ are used for nested commands. For example, the branches for $\itecmd{B}{\mathcal{S}_1 \ichoice \mathcal{S}_2}{\mathcal{S}_3}$ have labels $1(1)$, $1(2)$ and 2, corresponding to $\mathcal{S}_1$, $\mathcal{S}_2$ and $\mathcal{S}_3$, respectively.

The labels ``skip" and ``exec" are used for branches of the ODE. The branch where the initial state does not satisfy domain $D$ is labeled ``skip''. The other branch, where the ODE is executed, is labeled ``exec''. They come from applying the rules \vctag{vc-dWG-skip} and \vctag{vc-dWG-exec}, respectively.

\begin{rep}
Sequence labels ($b.b$) concatenate branch labels of subcommands of sequence command with ``.". For example, for `$S_1 \ichoice S_2; S_3 \ichoice S_4$', the precondition or VC generated by executing $S_1$ and $S_3$ is labeled 1.1, while the precondition or VC generated by executing $S_1$ and $S_4$ is labeled 1.2. When the branch label of subcommand has no value, it will not be concatenated.

Nested labels ($z(b)$) complement $z$ when $S_1$ or $S_2$ in `$\itecmd{B}{S_1}{S_2}$' or `$S_1 \ichoice S_2$' also create branches. The branch label created by the subcommand is nested with $z$. For example, for the command `$\itecmd{B}{S_1 \ichoice S_2}{S_3}$', the precondition or VC generated by executing $S_1$ is labeled 1(1), which means executing the first branch of $\ichoice$ in the if-branch. The precondition or VC generated by executing $S_2$ is labeled 1(2), and the precondition or VC generated by executing $S_3$ is labeled 2.

Assignments and skip commands do not have branch labels to distinguish different branches because they only have one branch to execute.

Finally, category labels and branch labels together form the final labels for VCs. Some of the VCs are already labeled above, and others formed in $\mathit{original\ precondition} \rightarrow \mathit{derived\ precondition}$ will inherit the labels of the derived preconditions. Labels can be only category labels (branch labels) if there are no branch labels(category labels). And some VCs have no labels if they are the only VC generated from its conclusion assertion. 
\end{rep}

\begin{rep}
\begin{example} Consider the following Hoare triple in which the program contains sequence, if-then-else, ODE, assignment and skip commands. The conclusion assertions and labels of each VC is shown in Table~\ref{tab:label}, where $x\geq 0$~(post) represents the postcondition, and $x\geq 0$~(inv) represents the invariant.
\[
\begin{split}
 \{x \leq 1\}  
 &\itecmd{x\leq 0}{x:=-x}{skip};
t:=0; \\
 &\ode{\dot{x}=1, \dot{t}=1}{t<1}\ 
\mathtt{invariant}\ [x\geq 0] 
\{x \geq 0\}   
\end{split}
\]
\end{example}

\begin{table}[]
    \caption{VCs and their Corresponding Conclusion Assertions and Labels}
    \label{tab:label}
    \centering
    \begin{tabular}{c | c | c}
        \hline
        VC & Conclusion Assertion & Label \\ \hline
        
        $x\geq 0 \land t=1 \rightarrow x\geq 0$ & 
        $x \geq 0$~(post) &
        exec\\ \hline
        
        $x \leq 1 \rightarrow x \leq 0 \rightarrow 0 \geq 1 \rightarrow -x \geq 0$ &
        $x \geq 0$~(post) &
        1.skip \\ \hline
        
        $x \leq 1 \rightarrow x > 0 \rightarrow 0 \geq 1 \rightarrow x \geq 0$ &
        $x \geq 0$~(post) &
        2.skip \\ \hline
        
        $t \leq 1 \rightarrow 1 \geq 0$ &
        $x \geq 0$~(inv) &
        maintain \\ \hline
        
        $x \leq 1 \rightarrow x \leq 0 \rightarrow 0 < 1 \rightarrow -x \geq 0$ &
        $x \geq 0$~(inv) &
        init 1 \\ \hline
        
        $x \leq 1 \rightarrow x > 0 \rightarrow 0 < 1 \rightarrow x \geq 0$ &
        $x \geq 0$~(inv) &
        init 2\\ \hline
    \end{tabular}
    
\end{table}
\end{rep}


\begin{example}
\label{ex:1}
This example illustrates assignments, nondeterministic choice, and loops.
\[
\begin{array}{ll}
& \{x \le 0\} \\
& \quad x := -x; \\
& \quad (x := x + 1 \ichoice x := x + 2)* \\
& \quad \quad \mbox {invariant } [x \ge 0]; \\
& \quad x := x + 1 \\
& \{x \ge 1\}
\end{array}
\]
The computation starts at postcondition $x\geq 1$. Applying \vctag{pre-assn} and \vctag{vc-loop-exit}, we get the VC $x\geq 0 \rightarrow x+1\geq 1$. Applying \vctag{pre-loop}, the loop's precondition is $x \geq 0$. The whole program's precondition is $-x \geq 0$ by applying \vctag{pre-assn} again. The loop body yields the preconditions $x+1\geq 0$ and $x+2\geq 0$ by \vctag{pre-choice} and \vctag{pre-assn}. Then we get $x\geq 0 \rightarrow x+1\geq 0$ and $x\geq 0 \rightarrow x+2\geq 0$ by applying \vctag{vc-loop-maintain}.
The overall VCs and their labels are:
\end{example}

\begin{center}
\small
\begin{tabular}{c | c | c}
    \hline
    VC & Conclusion Assertion & Label \\ \hline
    
    $x \le 0 \rightarrow -x\ge 0$ & 
    $x \geq 0$~(inv) &
    init\\ \hline
    
    $x\ge 0 \rightarrow x + 1\ge 1$ &
    $x \geq 1$~(post) &
    $\epsilon$ \\ \hline
    
    $x\ge 0 \rightarrow x + 1\ge 0$ &
    $x \geq 0$~(inv) &
    maintain 1 \\ \hline
    
    $x\ge 0 \rightarrow x + 2\ge 0$ &
    $x \geq 0$~(inv) &
    maintain 2 \\ \hline
\end{tabular}
\end{center}
With conclusion assertions and labels, we can store the solver (default Z3) for each VC and reuse the solver despite of minor modifications of code. For example, if we choose Wolfram Engine to prove $x \le 0 \rightarrow -x\ge 0$, ``init: wolfram" will be annotated after the invariant $x \geq 0$. If we then change the second line from $x:=-x$ into $x:=-2*x$, resulting in a different VC $x \le 0 \rightarrow -2 * x \ge 0 $, the solver of the VC is still Wolfram Engine.
\begin{example}
\label{ex:2}
This example illustrates non-deterministic assignments and ODEs (\#4 of KeYmaera X's basic benchmarks):
\[
\begin{array}{ll}
& \{x\ge 0\} \\
& \quad x := x+1; t := *\>(t\ge 0); \\
& \quad \ode{\dot{t}=-1,\dot{x}=2}{t>0}\mbox{ invariant } [x\ge 1] \\
& \{x\ge 1\}
\end{array}
\]
The computation of $\pre$ starts at postcondition $x\ge 1$. By $\vctag{pre-dWG-skip}$ and $\vctag{pre-dW-init}$, the ODE's preconditions are $\neg t>0\rightarrow x\ge 1$ and $t>0\rightarrow x\ge 1$. By $\vctag{pre-nassn}$ and $\vctag{pre-assn}$, the whole program's preconditions are $t_1\ge 0\rightarrow t_1>0\rightarrow x\ge 1$ and $t_1\ge 0\rightarrow \neg t_1>0\rightarrow x+1\ge 1$. The VCs $x\ge 1\wedge t=0\rightarrow x\ge 1$ and $t\ge 0\rightarrow 2\ge 0$ come from $\vctag{vc-dWG-exec}$ and $\vctag{vc-dI2}$, respectively. The overall list of VCs is:
\begin{center}
\small
\begin{tabular}{c | c | c}
    \hline
    VC & Conclusion Assertion & Label \\ \hline
    
    $x\ge 0\rightarrow t_1\ge 0\rightarrow t_1>0 \rightarrow x+1\ge 1$ & 
    $x \geq 1$~(inv) &
    init\\ \hline
    
    $x\ge 0\rightarrow t_1\ge 0\rightarrow \neg t_1 > 0 \rightarrow x+1\ge 1$ &
    $x \geq 1$~(post) &
    skip \\ \hline
    
    $x\ge 1\wedge t=0\rightarrow x\ge 1$ &
    $x \geq 1$~(post) &
    exec \\ \hline
    
    $t\ge 0\rightarrow 2\ge 0$ &
    $x \geq 1$~(inv) &
    maintain  \\ \hline
\end{tabular}
\end{center}
\end{example}

\begin{example}
\label{ex:3}
Finally, we consider an example with multiple ghost variables (\#18 of KeYmaera X's basic benchmarks):
\[
\begin{array}{ll}
& \{x \ge 0\} \\
& \quad t := *\>(t \ge 0); \ode{\dot{x}=x, \dot{t}=-1}{t > 0} \\
& \quad \mbox {invariant ghost } y\ (\dot{y}=-y) 
  \ \mbox {ghost } z\ (\dot{z} = z/2) \\
& \quad\quad [xy\ge 0] \ [yz^2 = 1] \\
& \{x \ge 0\}
\end{array}
\]
\begin{center}
\small
    \begin{tabular}{c | c | c}
        \hline
        VC & Conclusion Assertion & Label \\ \hline
        
        $x\ge 0\rightarrow t_1\ge 0\rightarrow t_1>0 \rightarrow \exists y\,z.\, xy\ge 0\wedge yz^2=1$ & 
        invariants &
        init\_all \\ \hline
        
        $x\ge 0\rightarrow t_1\ge 0\rightarrow \neg t_1 > 0 \rightarrow x\ge 0$ &
        $x \geq 0$~(post) &
        skip \\ \hline
        
        $xy\ge 0\wedge yz^2=1\wedge t=0 \rightarrow x\ge 0$ &
        $x \geq 0$~(post) &
        exec \\ \hline
        
        $t\ge 0\rightarrow x\cdot(-y)+xy\ge 0$ &
        $xy\ge 0$~(inv) &
        maintain \\ \hline
        
        $xy\ge 0\rightarrow t\ge 0\rightarrow yz(z/2) + (y(z/2) + (-y)z)z = 0$ &
        $yz^2=1$~(inv) &
        maintain \\ \hline
    \end{tabular}
\end{center}
The first VC comes from $\vctag{pre-dG-init}$. The remaining VCs are similar to Example~\ref{ex:2}, except that there is one VC for maintaining each invariant. When verifying the second invariant, the (dC) rule allows us to assume the first invariant.
\end{example}

\section{Highlighting}
\label{sec:highlighting}

In this section, we explain the highlighting mechanism we devised to help the user understand how each VC is derived from the program.
Essentially, when the user hovers over a VC, we highlight all parts
of the program that contribute to the computation of the VC,
including commands, assertions and domain constraints.

We highlight any assertion that contributes to the VC. In particular, invariants of an ODE that are already proved will be highlighted when proving the next invariant because they are added as assumptions in \vctag{vc-dC}. Preconditions whose variables are never reassigned will be highlighted because they are added as assumptions in \vctag{vc}.

Domain constraints of ODEs will be highlighted if they are used in the VC (e.g. the domain constraint $D$ in the VC generated by \vctag{vc-dWG-exec}).

Atomic commands are highlighted if they are traversed during VC generation. ODE commands are highlighted for VCs computed by \vctag{vc-dC} or \vctag{pre-sln}. For if-then-else and nondeterministic choice, only the branch that is actually traversed during VC generation will be considered for highlighting.

\begin{rep}
For example, in the following triple,
\[
\hoare{x \geq 0}{\itecmd{x < 1}{x:=x+1\ichoice x:=x+2}{\skipcmd}}{x\geq 1}
\]
the generation of VC `$x \geq 0 \rightarrow x < 1 \rightarrow x+1\geq 1$' traverses `$x:=x+1$' in the if-branch. Therefore, `$x:=x+1$' will be highlighted rather than the whole `$x:=x+1 \ichoice x:=x+2$' in the if-branch.
\end{rep}

Fig.\ \ref{fig:ex1} and \ref{fig:ex2} show the highlighting for some the VCs from Examples \ref{ex:1} and \ref{ex:2}.




\newcommand\highlight[1]{\adjustbox{bgcolor=lightgray}{\strut #1}}
\begin{figure}[tbh]
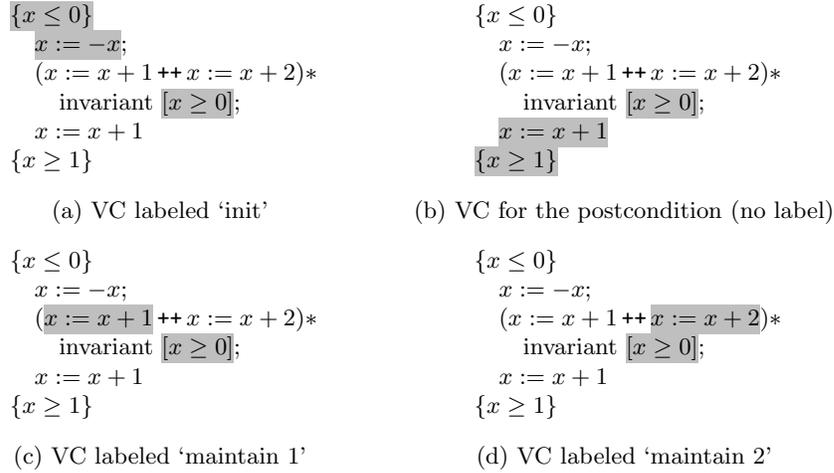

    \begin{subfigure}{0.5\textwidth}
    \[
    \begin{array}{ll}
    & \highlight{$\{x \le 0\}$} \\
    & \quad \highlight{$x := -x$}; \\
    & \quad (x := x + 1 \ichoice x := x + 2)* \\
    & \quad \quad \mbox {invariant } \highlight{$[x \ge 0]$}; \\
    & \quad x := x + 1 \\
    & \{x \ge 1\}
    \end{array}
    \]
    \vspace{-1mm}
    \caption{VC labeled `init'}
    \vspace{2mm}
    \end{subfigure}
    \begin{subfigure}{0.5\textwidth}
    \[
    \begin{array}{ll}
    & \{x \le 0\} \\
    & \quad x := -x; \\
    & \quad (x := x + 1 \ichoice x := x + 2)* \\
    & \quad \quad \mbox {invariant } \highlight{$[x \ge 0]$}; \\
    & \quad \highlight{$x := x + 1$} \\
    & \highlight{$\{x \ge 1\}$}
    \end{array}
    \]
    \vspace{-1mm}
    \caption{VC for the postcondition (no label)}
    \vspace{2mm}
    \end{subfigure}

    \begin{subfigure}{0.5\textwidth}
    \[
    \begin{array}{ll}
    & \{x \le 0\} \\
    & \quad x := -x; \\
    & \quad (\highlight{$x := x + 1$} \ichoice x := x + 2)* \\
    & \quad \quad \mbox {invariant } \highlight{$[x \ge 0]$}; \\
    & \quad x := x + 1 \\
    & \{x \ge 1\}
    \end{array}
    \]
    \vspace{-1mm}
    \caption{VC labeled `maintain 1'}
    \vspace{-1mm}
    \end{subfigure}
    \begin{subfigure}{0.5\textwidth}
    \[
    \begin{array}{ll}
    & \{x \le 0\} \\
    & \quad x := -x; \\
    & \quad (x := x + 1 \ichoice \highlight{$x := x + 2$})* \\
    & \quad \quad \mbox {invariant } \highlight{$[x \ge 0]$}; \\
    & \quad x := x + 1 \\
    &  \{x \ge 1\}
    \end{array}
    \]
    \vspace{-1mm}
    \caption{VC labeled `maintain 2'}
    \vspace{-1mm}
    \end{subfigure}
    \caption{Highlighting for the four VCs in Example~\ref{ex:1} 
    }
    \label{fig:ex1}
\end{figure}




\begin{figure}[tbh]
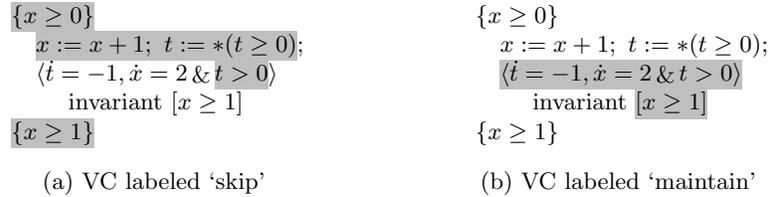

    \begin{subfigure}{0.5\textwidth}
    \[
    \begin{array}{ll}
    & \highlight{$\{x\ge 0\}$} \\
    & \quad \highlight{$x := x+1;\ t := * (t\ge 0)$}; \\
    & \quad \ode{\dot{t}=-1,\dot{x}=2}{\highlight{$t>0$}}\\
    &\quad\quad\mbox{ invariant } [x\ge 1] \\
    & \highlight{$\{x\ge 1\}$}
    \end{array}
    \]
    \vspace{-1mm}
    \caption{VC labeled `skip'}
    \vspace{-1mm}
    \end{subfigure}
    \begin{subfigure}{0.5\textwidth}
    \[
    \begin{array}{ll}
    & \{x\ge 0\} \\
    & \quad x := x+1;\ t := * (t\ge 0); \\
    & \quad \highlight{$\ode{\dot{t}=-1,\dot{x}=2}{t>0}$}\\
    &\quad\quad\mbox{ invariant } \highlight{$[x\ge 1]$} \\
    & \{x\ge 1\}
    \end{array}
    \]
    \vspace{-1mm}
    \caption{VC labeled `maintain'}
    \vspace{-1mm}
    \end{subfigure}
    \caption{Highlighting for the two of the VCs in Example~\ref{ex:2}}
    \label{fig:ex2}
\end{figure}

\begin{rep}
\begin{example}
\[
\hoare{x\geq 0 \wedge y \geq 0 }
{t:=*(t\geq 0); \ode{\dot{x}=y,\dot{y}=1, \dot{t}=-1}{t>0}\ \mathit{invariant}\ [y \geq 0][x\geq 0]}
{x\geq 0}
\]

\begin{table}[]
    \centering
    \caption{VCs and their Corresponding Highlighting Parts}
    \begin{tabular}{c | c}
        \hline
        VC & Highlighting \\ \hline
        $y \geq 0 \wedge x \geq 0 \wedge t =0 \rightarrow x \geq 0$ &
        $y \geq 0$, $x \geq 0$,$t>0$ $x \geq 0$ \\ \hline
        $t \geq 0 \rightarrow 1\geq 0$ &
        $\ode{\dot{x}=y,\dot{y}=1, \dot{t}=-1}{t>0}$, $y\geq0$ \\ \hline
        $t \geq 0 \wedge y\geq 0 \rightarrow y\geq 0$ &  $\ode{\dot{x}=y,\dot{y}=1, \dot{t}=-1}{t>0}$, $y\geq 0$, $x\geq 0$ \\ \hline
        $x\geq 0 \wedge y \geq 0  \wedge t_1\geq 0 \wedge t_1 > 0 \rightarrow x \geq 0$ &  $x\geq 0$, $y\geq 0$, $t:=*(t\geq 0)$, $x \geq 0$ \\ \hline
        $x\geq 0 \wedge y \geq 0  \wedge t_1\geq 0 \wedge t_1 > 0 \rightarrow y \geq 0$ &  $x\geq 0$, $y\geq 0$, $t:=*(t\geq 0)$, $y \geq 0$ \\ \hline
        $x\geq 0 \wedge y \geq 0  \wedge t_1\geq 0 \wedge t_1 \leq 0 \rightarrow x \geq 0$ &  $x\geq 0$, $y\geq 0$, $t:=*(t\geq 0)$, $x \geq 0$ \\ \hline
        
    \end{tabular}
    \label{tab:highlight}
\end{table}
\end{example}
\end{rep}

\section{Implementation and Evaluation}
\label{sec:evaluation}
In this section, we present the implementation of \textsf{HHLPy} and evaluate it on Simulink/Stateflow models and on KeYmaera X benchmarks. All verified examples are available online, coming with the tool.

\subsection{Implementation}

Fig.~\ref{fig:arch} shows the architecture of the tool. The user inputs HCSP programs and annotations in the editor (the HCSP programs can also come from translation of Simulink/Stateflow models). The core \textsf{HHLPy} engine then parses the input and generates VCs. The user interface displays the VCs and allows users to choose a solver for each VC. The solver will be invoked, with the results displayed to the user interface. The backend of \textsf{HHLPy} is implemented in Python, and the graphical user interface is implemented using JavaScript. A screenshot of the user interface is shown in Fig.~\ref{fig:screenshot}.
\begin{figure}[h!]
    \centering
    \includegraphics[width=\textwidth]{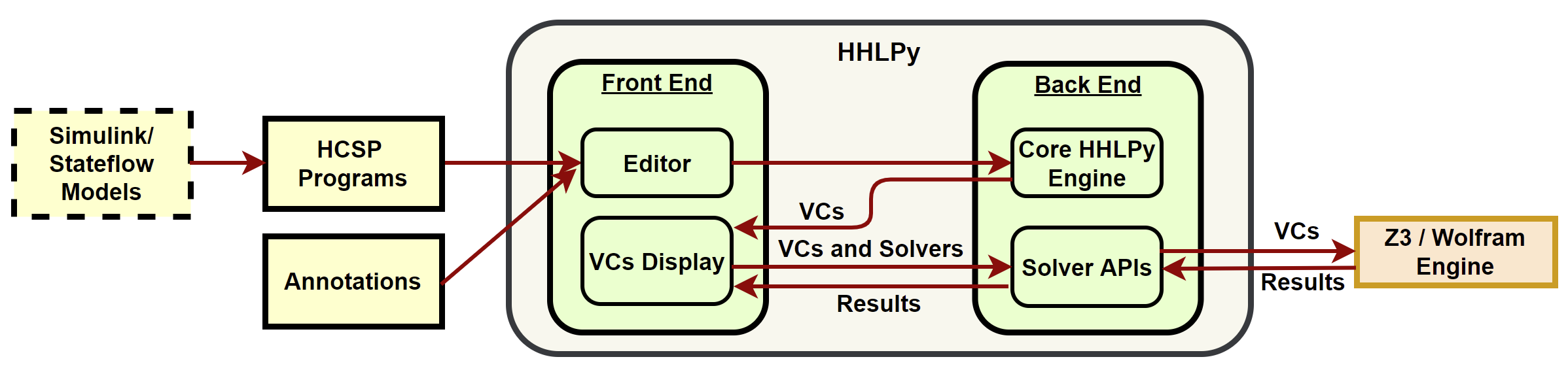}
    \caption{Architecture of HHLPy}
    \label{fig:arch}
\end{figure}

\begin{figure}[h!]
    \centering
    \includegraphics[scale=0.3]{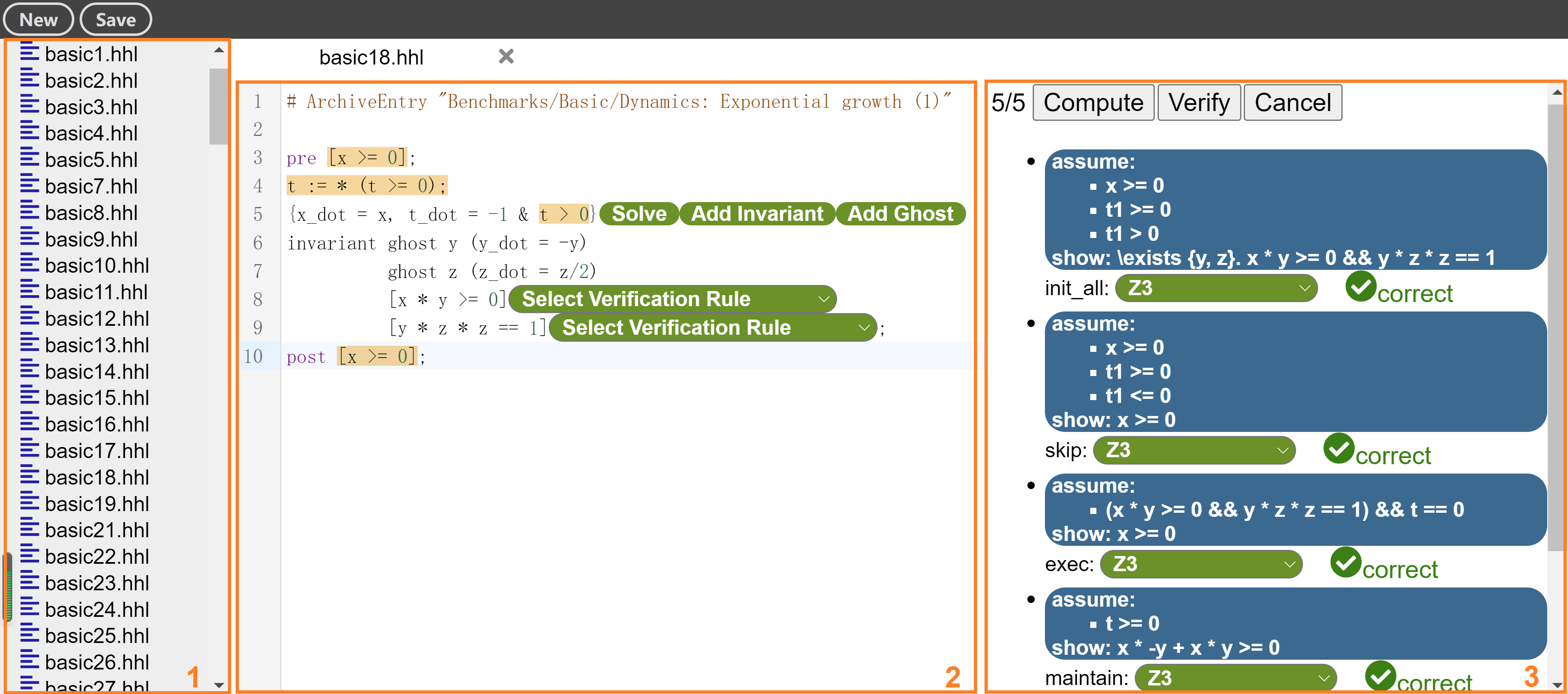}
    \caption{Screenshot of user interface. The left panel (1) shows a list of example files. The middle panel (2) is the editor area, where user can edit the program and add annotations either directly as text or by clicking on buttons. The right panel (3) shows the VCs. When hovering over each VC, the relevant part of the code is highlighted.}
    \label{fig:screenshot}
\end{figure}

\subsection{Evaluation on Simulink/Stateflow Models}

To illustrate the use of our tool as part of an existing toolchain to verify correctness of Simulink/Stateflow models, we show two example models, one from Simulink and one from Stateflow.

\subsubsection{Cruise Control System}

The first example is a cruise control system of an automotive vehicle~\cite{Kekatos-2021-verify}. The system stabilizes the speed of a vehicle around some desired speed~(15 m/s in our case) using a PI controller. The PI controller adjusts the control force according to the difference between actual speed and desired speed as well as its integral. The vehicle follows its physical dynamics. The Simulink models are presented in Appendix~\ref{sec:simulink-models}.

We first applied the approach by Xu et al.~\cite{SimulinkTranslate} to translate the Simulink models into an HCSP program. In the program, the controller and vehicle dynamics are combined into a single ODE. Given the initial speed $v=14$ and initial integral value $I = 700$ of the controller, which are close to the stable point ($v=15, I=750$), we want to verify that the speed remains in the interval $[13.5, 16.5]$.

To verify the Hoare triple, we annotated the ODE and loop with the invariant $1.3*(I-750)^2-198* (I-750)*(v-15)+12192*(v-15)^2 \leq 5542$, and used (dI) rule to prove the ODE invariant. The invariant was derived following the standard theory for analyzing linear dynamical systems. The annotated HCSP program is illustrated below. The tool generated seven VCs, and Z3 can prove all of them.

\begin{lstlisting}[language=HHLPy, columns=flexible]
pre [v == 14][I == 700];

t := 0;
_tick := 0;
tt := 0;

{
  {tt_dot = 1, I_dot = (15 - v) * 40, v_dot = ((15 - v) * 600 + I - v * 50) * 0.001 & tt < 1}
      invariant [1.3*(I-750)^2 - 198 * (I-750)*(v-15)+12192*(v-15)^2<=5542];
  t := t + tt;
  _tick := _tick + 1;
  tt := 0;
}*
  invariant [1.3*(I-750)^2 - 198 * (I-750)*(v-15)+12192*(v-15)^2<=5542];
post [v >= 13.5][v <= 16.5];

\end{lstlisting}

\subsubsection{Sawtooth Wave}

The sawtooth wave is a Stateflow model generating a signal that alternates between increasing from 0 to 1 and decreasing from 1 to 0. It illustrates functionality in Stateflow such as hierarchical states and specifying ODEs in a state. The Stateflow model is presented in Appendix \ref{sec:simulink-models}. The signal $x$ follows the ODE ${\dot{x}=y}$, with $y$ switching between 1 and $-1$ per unit time. We want to verify that every time $y$ switches, $x$ is still between 0 and 1.

We translated the Stateflow model into an HCSP program with the approach by Guo et al.~\cite{StateflowTranslate} (code shown in Appendix \ref{sec:simulink-models}). To verify the program, the loop is annotated with four invariants (mostly having to do with the relationship between Stateflow locations and value of variable $x$), and the ODE is annotated with ``solution''. A total of 62 VCs are generated and proved to be true by Z3.

\subsection{Evaluation on Benchmarks from KeYmaera X}

We also evaluated our tool on the basic and nonlinear benchmarks\footnote{The benchmarks are available at \url{https://github.com/LS-Lab/KeYmaeraX-projects/tree/master/benchmarks}} from KeYmaera X. We first translated the examples from \dL\ to HCSP manually, trying to maintain semantic equivalence as much as possible. Due to the differences between \dL\ and HCSP, some examples can not be translated into HCSP programs. We annotated the programs with invariants and proof rules, mostly following the existing proofs in KeYmaera X.

Given the annotations, \textsf{HHLPy} can verify 50 out of 60 examples in the basic benchmarks. In comparison, KeYmaera X solves 58 examples in the scripted mode (with detailed proof scripts), and 55 examples in the hints mode (with invariants annotated in the model)~\cite{Mitsch-2021-ARCH}. Of the ten unsolved examples, we are unable to translate eight of them to HCSP due to use of \dL-specific constructs; one is non-polynomial; and the last one makes use of invariants containing old versions of variables. For the nonlinear benchmarks, \textsf{HHLPy} can verify 103 out of 141 examples (compared to 108 in the scripted mode and 95 in the hints mode for KeYmaera X~\cite{Mitsch-2021-ARCH}). Most of the unsolved ones are because we are unable to find the invariants or their VCs cannot be proved in reasonable time by Z3 or Wolfram Engine. Specifically, \textsf{HHLPy} can verify 9 examples which KeYmaera X cannot verify in hints mode, while KeYmaera X can verify one example in hints mode that \textsf{HHLPy} cannot verify. For this one example, we have not found the invariants or specific rules, while KeYmaera X verifies it using a general \texttt{ODE} rule. 

In the 153 examples solvable by \textsf{HHLPy}, the user only needs to add annotations including loop/ODE invariants and ODE rules; just a couple of annotations are needed per problem. For some problems, it is necessary to switch the backend solver from the default Z3 to Wolfram Engine. After this, \textsf{HHLPy} can finish the proof automatically. These experiments show that our tool can be used to solve a wide range of examples from existing benchmarks with little manual effort. Moreover, from the evaluation on the benchmarks we note that there are some VCs that Z3 can solve but Wolfram Engine cannot, and vice versa, showing the two solvers have complementary 
advantages. Generally speaking, Z3 handles complex boolean structures better, while Wolfram Engine has advantages in expressions containing many decimal numbers.

\begin{rep}
We compared the performance between HHLPy and KeYmaera X in scripted (with detailed proof scripts) and hints (with invariants annotated in the model) modes\cite{Mitsch-2021-ARCH}. As HHLPy verifies programs annotated with invariants and methods for ODEs, it is in a mode between scripted and hints modes. As shown in Table~\ref{results}, our tool performs competitively with KeYmaera X on the benchmarks from KeYmaera X. 

\begin{table}[!ht]
\caption{Evaluation Results on Basic and Nonlinear Benchmarks. Timeout = 300s. ``Verified" represents the number of examples verified successfully. ``HCSP" represents the number of examples that can be translated into annotated HCSP programs. ``Overall" represents the number of overall examples in benchmarks.}\label{results}
\center
\begin{tabular}{| c | c | c | c | c | c | c |}\hline
\multicolumn{1}{|c|}{\multirow{2}{*}{Tools}} & \multicolumn{3}{c|}{Basic} & \multicolumn{3}{c|}{Nonlinear}\\ \cline{2-7}
 & Verified  & HCSP & Overall & Verified & HCSP & Overall \\ \hline
HHLPy & 50 & 52 & \multirow{3}{*}{60} & 
100 & 141 & \multirow{3}{*}{141}  \\ \cline{1-3}\cline{5-6}
KeYmaera X(scripted) & 58 & \_ & \multirow{3}{*}{} & 108 & \_ & \multirow{3}{*}{} \\
\cline{1-3}\cline{5-6}
KeYmaera X(hints) & 55 & \_ & \multirow{3}{*}{} & 95 & \_ & \multirow{3}{*}{}\\
\hline
\end{tabular}
\end{table}
\end{rep}

\section{Related Work}
\label{sec:related-work}

Differential dynamic logic (\dL)~\cite{Platzer08,Platzer18} models hybrid systems by extending dynamic logic with continuous evolution. Reasoning rules about continuous evolution include differential invariants, differential weakening, differential cut, and differential ghosts. The rules are stated in the form of a uniform substitution calculus~\cite{Platzer17}, and they are complete~\cite{PlatzerT20}. Differential dynamic logic has been implemented in KeYmaera~\cite{PlatzerQ08} and KeYmaera X~\cite{Fulton-2015-KeYmaera}, whose user interfaces display current subgoals in sequent calculus form and allow users to point and click to construct proofs. The Bellerophon language allows users to perform proofs using a tactic language~\cite{FultonMBP17}. The KeYmaera X tool produces proofs that can be independently checked in Isabelle and Coq~\cite{BohrerRVVP17}. Liebrenz et al. developed a method to translate Simulink models to \dL\ and to verify them in KeYmaera X~\cite{LiebrenzHG18}.

Huerta y Munive and Struth represented \dL\ programs using Kleene algebras, and built verification components for hybrid systems in Isabelle/HOL~\cite{MuniveS18,MuniveS22}. Foster et al. proposed Hoare logic rules and refinement calculi for hybrid programs~\cite{FosterMS20}
and extended the verification components in Isabelle/HOL~\cite{FosterMGS21}, e.g., with syntax translation to obtain more user-friendly modeling and specification languages and with proof automation using Eisbach. Huerta y Munive and Struth also described formalization of solutions to affine and linear systems of ODEs, with applications to verifying correctness of such systems~\cite{Munive20}.

Both of the above series of works focused on hybrid programs modeled using \dL. As discussed in Section~\ref{sec:preliminaries} and~\ref{sec:proof-rules}, the semantics of continuous evolution is different from that in \dL; hence the proof rules need to be adapted, resulting in particular to significant changes to differential weakening and differential cut rules. In addition, compared to KeYmaera X in scripted mode, \textsf{HHLPy} can finish the verification automatically once programs are annotated with loop/ODE invariants and ODE rules. Compared to KeYmaera X in hints mode, \textsf{HHLPy} shows the VCs that cannot be proved, and highlights the set of code fragments that contribute to generating the VCs, which help users to debug programs and annotations.

Goncharov and Neves introduced the \textsc{HybCore} language for hybrid computation~\cite{GoncharovN19}. Similarly to HCSP, \textsc{HybCore} defines deterministic semantics for domain constraints of ODE. The connection with Moggi's work on computational effects~\cite{iandc/Moggi91} potentially aids reasoning and verification in \textsc{HybCore}. However, concrete verification methods remain future work.

Compared to the previous version of hybrid Hoare logic~\cite{WangZZ15}, we focus only on the sequential fragment of HCSP, resulting in much simpler rules that permits automatic VC generation. On the other hand, we consider a full set of reasoning rules for ODEs in parallel with \dL, rather than only the invariant rule in~\cite{WangZZ15}.

The design of our tool is similar to many other (semi-)automatic program verification tools, such as Dafny~\cite{Leino10}, VeriFast~\cite{JacobsSPVPP11}, and Why3~\cite{FilliatreP13}, in that annotations are inserted into the program code. Our work differs from these tools firstly in being able to handle hybrid programs. Moreover, we designed detailed labeling and highlighting mechanisms to improve robustness of the annotations and help visualization. These improvements are not limited to hybrid programs, and can potentially be incorporated into other program verification tools as well.

\section{Conclusion}
\label{sec:conclusion}

We presented \textsf{HHLPy}, a tool for verification of hybrid programs written in the sequential fragment of HCSP. The backend of the tool implements a Hoare logic that includes rules for reasoning about continuous evolution adapted from \dL. We also designed labeling and highlighting mechanisms to improve user interaction. We demonstrated the capabilities of the tool on HCSP programs translated from Simulink/Stateflow models and on KeYmaera X benchmarks.

We leave extension of the deduction system to handle communication, interrupts, and parallel composition to future work. On the side of implementation and applications, we intend to further extend the tool to be able to handle non-polynomial ODEs and invariants and permit interactive proofs of VCs.

\paragraph{Acknowledgments.}

This work is supported by the National Natural Science Foundation of China under grant No.\ 62032024, 62002351, and a Chinese Academy of Sciences President’s International Fellowship for Postdoctoral Researchers under grant No.\ 2021PT0015.

%
%
%
\bibliographystyle{splncs04}
\bibliography{bib}

\begin{appendix}
\appendix
\section{More on the Proof Rules}
\label{sec:more-proof-rules}

In this section, we provide more details on the proof rules.

\subsection{Proof Rules for Other Commands}
\label{subsec:more-hoare-rules}

The proof rules for commands other than the ODE are given in Fig.~\ref{fig:hoare-rules}.

\begin{figure}
\[
    \begin{array}{c}
    \prftree[r]{skip}{\hoare{P}{\skipcmd}{P}} \qquad
    \prftree[r]{assign}{\hoare{P[e/x]}{x := e}{P}} \qquad
    \prftree[r]{ichoice}{\hoare{P_1}{c_1}{Q}}{\hoare{P_2}{c_2}{Q}}{\hoare{P_1\wedge P_2}{c_1\ichoice c_2}{Q}}
    \vspace{2mm} \\
    \prftree[r]{nondet-assign}{\hoare{B[y/x]\rightarrow Q[y/x]}{x := *(B)}{Q}} \qquad
    \prftree[r]{seq}{\hoare{P}{c_1}{Q}}{\hoare{Q}{c_2}{R}}{\hoare{P}{c_1;c_2}{R}}
    \vspace{2mm} \\
    \prftree[r]{ite}{\hoare{P_1}{c_1}{Q}}{\hoare{P_2}{c_2}{Q}}{\hoare{(B\rightarrow P_1)\wedge (\neg B\rightarrow P_2)}{\itecmd{B}{c_1}{c_2}}{Q}} \vspace{2mm}\\
    \prftree[r]{loop}{\hoare{I}{c}{I}}{\hoare{I}{\loopcmd{c}}{I}} \qquad
    \prftree[r]{conseq}{P' \rightarrow P}{\hoare{P}{c}{Q}}{Q \rightarrow Q'}{\hoare{P'}{c}{Q'}}
    \end{array}
\]
    \caption{Hoare rules for the non-ODE commands}
    \label{fig:hoare-rules}
\end{figure}

\subsection{Example for the (dW) Rule}

We give some examples of applying the (dW) rule with different invariants, such as the trivial invariants $\mathsf{true}$ and $\mathsf{false}$ or the non-trivial invariant $x-y=1$.

\begin{example}
In this example, the precondition guarantees that the ODE starts inside the domain $D$. The Hoare triple is:
\[ \hoare{x < 5}{\ode{\dot{x}=1}{x<5}}{x=5}\]
As the boundary of domain $D$ implies the postcondition, no invariant is needed (set $I=\mathsf{true}$ by default). From the postcondition $x=5$, we obtain precondition $\neg x<5\rightarrow x=5$ and VC $x=5\rightarrow x=5$. So the VCs are:
\[
\begin{array}{c}
x < 5 \rightarrow \neg x<5 \rightarrow x=5 \\
x=5\rightarrow x=5
\end{array}
\]
\end{example}

\begin{example}
In this example, the precondition guarantees that the ODE starts outside the domain $D$. The Hoare triple is:
\[ \hoare{x>10}{\ode{\dot{x}=1}{x<5}}{x>8} \]
Here we set invariant to be $\mathsf{false}$. From the postcondition $x>8$, we obtain precondition $\neg x<5\wedge x>8$ and no VCs from the ODE. The overall VCs are:
\[
\begin{array}{c}
x>10\rightarrow \neg x<5\wedge x>8
\end{array}
\]
\end{example}

\begin{example}
Now consider a case where the starting position may or may not lie within the domain $D$. The Hoare triple is:
\[ \hoare{x<6}{\ode{\dot{x}=1}{x<5}}{x\ge 5\wedge x<6}\]
No invariant is needed ($I=\mathsf{true}$ is set by default). From the postcondition, we obtain the precondition $\neg x<5\to x\ge 5\wedge x<6$, and the VC $x=5\rightarrow x\ge 5\wedge x<6$. So the overall VCs are:
\[
\begin{array}{c}
x<6\rightarrow \neg x<5\rightarrow x\ge 5\wedge x<6 \\
x=5\rightarrow x\ge 5\wedge x<6
\end{array}
\]
\end{example}

\begin{rep}
\begin{example}
Now consider a case that a non-trivial invariant should be introduced. The Hoare triple is:
\[ \hoare{x^2+y^2=1}{\ode{\dot{x}=-y,\dot{y}=x}{x>0}}{x^2+y^2=1} \]
We set the invariant to be $x^2+y^2-1=0$, requiring to show
\[ \inv{x\ge 0}{\dot{x}=-y,\dot{y}=x}{x^2+y^2-1=0} \] 
And we get the VC $x=0\wedge x^2+y^2-1=0\rightarrow x^2+y^2=1$ and two preconditions $x>0 \rightarrow x^2+y^2-1=0$ and $\neg x>0 \rightarrow x^2+y^2 =1$.
\end{example}
\end{rep}

\begin{example}\label{ex:dw-nont}
Now we consider an example where a non-trivial invariant must be introduced. The Hoare triple is:
\[ \hoare{x=0\wedge y=0}{\ode{\dot{x}=1,\dot{y}=1}{x<5}}{y=5} \]
We set the invariant to be $x-y=0$, requiring to show
\[ \inv{x\le 5}{\dot{x}=1,\dot{y}=1}{x-y=0} \]
We get the VC $x=5\wedge x-y=0\rightarrow y=5$ and the precondition $(x<5 \rightarrow x-y=0) \land (\neg x<5 \rightarrow y=5)$.
\end{example}



\subsection{Example and Proof for the (dI) Rule}

We give an example showing the use of the (dI) rule.
\begin{rep}
\begin{example}[Example~\ref{ex:dw-nont} continued]
In this example, we continue with the invariant triple
\[ \inv{x\ge 0}{\dot{x}=-y,\dot{y}=x}{x^2+y^2-1=0} \]
The Lie derivative of $x^2+y^2-1$ is $2x\dot{x}+2y\dot{y}=-2xy+2xy=0$. So the premise of the (dI) rule is $x\ge 0\rightarrow 0=0$.
\end{example}
\end{rep}

\begin{example}[Example~\ref{ex:dw-nont} continued]
In this example, we continue with the invariant triple we get:

\[ \inv{x\le 5}{\dot{x}=1,\dot{y}=1}{x-y=0} \]

The Lie derivative of $x-y$ is $\dot{x}-\dot{y}=1-1=0$. So the premise of the (dI) rule is $x\le 5\rightarrow 0=0$.

\end{example}

\begin{proof}[of the (dI) rule]
We consider each of the rules in turn.

\begin{description}
\item[Rule dI$_=$] Consider a solution $\boldsymbol{\gamma}:[0,T]\to\mathbb{R}^n$ of the ODE, satisfying the condition that $\boldsymbol{\gamma}(t)$ satisfies $P$ for all $t\in [0,T]$. Then, the derivative of $f(\boldsymbol{\gamma}(t))$ is $0$ for all $t\in[0,T]$. This shows that $f(\boldsymbol{\gamma}(t))$ is constant within this interval. If furthermore $f(\boldsymbol{\gamma}(0))=0$, then we get $f(\boldsymbol{\gamma}(T))=0$, as desired.

\item[Rule dI$_\succcurlyeq$] Similar to before, except now the derivative of
$f(\boldsymbol{\gamma}(t))$ is greater than or equal to $0$ for all $t\in [0,T]$, so
$f(\boldsymbol{\gamma}(t))$ is a non-decreasing function of $t$ within this interval. If
furthermore $f(\boldsymbol{\gamma}(0))\succcurlyeq 0$, then we get
$f(\boldsymbol{\gamma}(T))\succcurlyeq 0$, as desired.

\item[Rule dI$_{\neq}$] Similar to before, we get that the derivative of $f(\boldsymbol{\gamma}(t))$ is $0$ for all $t\in [0,T]$. This shows that $f(\boldsymbol{\gamma}(t))$ is constant within this interval. If furthermore $f(\boldsymbol{\gamma}(0)) \neq 0$, then we get $f(\boldsymbol{\gamma}(T)) \neq 0$, as desired.
\end{description}
\end{proof}

\subsection{Example and Proof for the (dC) Rule}

\begin{example}
In this example, we insert an intermediate invariant to help prove the final invariant. The invariant triple is:
\[ \inv{y\le 1}{\dot{x}=y, \dot{y}=1}{y>0 \wedge x>0} \]
Using the (dC) rule, the two premises are:
\[ \inv{y\le 1}{\dot{x}=y, \dot{y}=1}{y>0}\]
\[ \inv{y \le 1 \wedge y > 0}{\dot{x}=y, \dot{y}=1}{x>0}\]
Both the two premises can be proved by using the (dI) rule.
     
\end{example}

\begin{proof}[of the (dC) rule]
Consider a solution $\boldsymbol{\gamma}:[0,T]\to\mathbb{R}^n$ of the ODE, satisfying the condition that $\boldsymbol{\gamma}(t)$ satisfies $P$ for all $t\in[0,T]$, and $\boldsymbol{\gamma}(0)$ satisfies $Q_1\wedge Q_2$. We wish to show $\boldsymbol{\gamma}(t)$ satisfies $Q_1\wedge Q_2$ for all $t\in [0,T]$. By the first premise, we get $\boldsymbol{\gamma}(t)$ satisfies $Q_1$ for all $t\in [0,T]$. Then by the second premise, $\boldsymbol{\gamma}(t)$ satisfies $Q_1\wedge Q_2$ for all $t\in [0,T]$.
\end{proof}

\subsection{Example and Proof for the (dG) Rule}

\begin{example}
In this example, the Hoare triple is:
\[
\hoare{x > 0\wedge t=0}{\ode{\dot{x}=x, \dot{t}=1}{t<10}}{x > 0}
\]
Using the (dG) rule, we introduce a ghost variable $y$ with the differential equation $\dot{y}=-\frac{y}{2}$ and set the invariant for the new differential equations to be $xy^2=1$. Then the new goal is:
\[
 \inv{t<10}{\dot{x}=x, \dot{y}=-\frac{y}{2}, \dot{t}=1}{xy^2=1}
\]
We get the VC $t=10 \wedge xy^2=1 \rightarrow x>0$ and the precondition $(t<10\rightarrow \exists y.xy^2=1) \land (\neg t<10 \rightarrow x>0)$.

\end{example}

\begin{rep}
\begin{example}
In this example, two ghost variables are added. The Hoare triple is:
\[
\hoare{x\ge 0\wedge t =0}{\ode{\dot{x}=x, \dot{t}=1}{t<10}}{x\ge 0}
\]
We invoke (dG) rule to introduce two ghost variables: $y$ with ODE $\dot{y}=-y$ and $z$ with ODE $\dot{z}=\frac{z}{2}$, with $I$ being $xy \geq 0 \wedge yz^2=1$. Then the new goal is:

\[
\inv{t<10}{\dot{x}=x,\dot{y}=-y, \dot{z}=\frac{z}{2}, \dot{t}=1}{xy\ge 0\wedge yz^2=1}
\]
Then we can use (dC) and (dI) rule to prove the above invariant triple.

The derived precondition is:
\[
(t<10\rightarrow \exists y,z.\ xy\geq 0 \wedge yz^2=1)\wedge(\neg t<10 \rightarrow x\geq0)
\]
The overall VCs are:
\[
\begin{array}{c}
     x \geq 0\wedge t=0 \rightarrow (t<10\rightarrow \exists y,z.\ xy\geq 0 \wedge yz^2=1)\wedge(\neg t<10 \rightarrow x\geq0)\quad \mbox{(precondition)}  \\
     t=10 \wedge xy\geq 0 \wedge yz^2=1 \rightarrow x\geq 0\quad \mbox{(from the (dG) rule)} \\
     t<10\rightarrow xy+x(-y)\geq 0\quad \mbox{(from (dI) and (dC) rule)}\\
     t<10\wedge xy\geq 0 \rightarrow(-y)z^2+2yz\frac{z}{2}=0 \quad \mbox{(from (dI) and (dC) rule)}
\end{array}
\]

The proof script can be written as:
\begin{verbatim}
pre [x >= 0][t == 0]
{x_dot = x, t_dot = 1 & t < 10}
invariant
  ghost y (y_dot = -y)
  ghost z (z_dot = z/2)
    [x * y >= 0]
    [y * z * z == 1]
post [x >= 0]
\end{verbatim}
\end{example}

\end{rep}

\begin{proof}[of the (dG) rule]
Given starting state $\boldsymbol{x}_0$, we distinguish two cases based on whether $\boldsymbol{x_0}$ satisfies $D$.
\begin{enumerate}
    \item If $\boldsymbol{x}_0$ satisfies $D$, then there exists a solution $\boldsymbol{\gamma}: [0,T]\to\mathbb{R}^n$ of the ODE $\langle\dot{\boldsymbol{x}}=\boldsymbol{e}\rangle$ with initial condition $\boldsymbol{x}(0) = \boldsymbol{x}_0$, such that $\boldsymbol{\gamma}(t)$ satisfies $D$ for $t\in [0, T)$ and $\boldsymbol{\gamma}(T)$ satisfies $\neg D$. We wish to show that $\boldsymbol{\gamma}(T)$ satisfies $Q$. Since $D\rightarrow \exists \boldsymbol{y}.I$ holds in the precondition and $\boldsymbol{x}_0$ satisfies $D$, using $\boldsymbol{r}$ to denote a witness of $\boldsymbol{y}$, we get that $(\boldsymbol{x}_0, \boldsymbol{r})$ satisfies $I$. With $\boldsymbol{f}(\boldsymbol{x},\boldsymbol{y})$ satisfying Lipschitz condition, let $\boldsymbol{\psi}: [0,T]\to\mathbb{R}$ denote the solution of ODE $\langle\dot{\boldsymbol{y}}=\boldsymbol{f}(\boldsymbol{x},\boldsymbol{y})\rangle$ with initial value $\boldsymbol{r}$. Given that $\boldsymbol{y}$ are fresh variables and do not limit the duration of the solutions of $\boldsymbol{x}$, we get that $(\boldsymbol{\gamma}, \boldsymbol{\psi})$ is the solution of the ODE $\langle\dot{\boldsymbol{x}}=\boldsymbol{e}, \dot{\boldsymbol{y}}=\boldsymbol{f}(\boldsymbol{x}, \boldsymbol{y})\rangle$ with initial condition $(\boldsymbol{x}_0, \boldsymbol{r})$. Then from $\inv{\overline{D}}{\dot{\boldsymbol{x}}=\boldsymbol{e}, \dot{\boldsymbol{y}}=\boldsymbol{f}(\boldsymbol{x}, \boldsymbol{y})}{I}$, we get that $(\boldsymbol{\gamma}(t), \boldsymbol{\psi}(t))$ satisfies $I$ for $t\in[0, T]$. From $\partial D \wedge I \rightarrow Q$ and $(\boldsymbol{\gamma}(T), \boldsymbol{\psi}(T))$ satisfies $I$ and $\partial D$, we get that $(\boldsymbol{\gamma}(T), \boldsymbol{\psi}(T))$ satisfies $Q$. Since $\boldsymbol{y}$ does not occur in $Q$, we get that $\boldsymbol{\gamma}(T)$ satisfies $Q$, as desired.
    \item If $\boldsymbol{x}_0$ does not satisfy $D$, the proof is similar to the one for the (dW) rule.
\end{enumerate}
\end{proof}

\subsection{Example and Proof for the Barrier Certificate Rule}

\begin{example}
In this example, we use the (bc) rule to prove the following invariant triple:
\[ \inv{t\le 10}{\dot{x}=x^3+x^4}{x^3>5} \]
Then the premise of the (bc) rule is 
\[ t\le 10 \wedge x^3=5 \rightarrow 3x^2(x^3+x^4)>0 \]

\end{example}

\begin{proof}[of the (bc) rule]
The proof is by contradiction. Consider a solution $\boldsymbol{\gamma}:[0, T] \to\mathbb{R}^n$ of the ODE, satisfying the condition that $\boldsymbol{\gamma}(t)$ satisfies $P$ for all $t\in [0,T]$, and $\boldsymbol{\gamma}(0)$ satisfies $f \ge 0$. Assume there exists $t' \in [0,T]$ such that $f(\boldsymbol{\gamma}(t')) < 0$. According to the proof of strict barrier certificate~\cite{PrajnaJP07} in \cite{ZhaoZCLW21}, we get that there exists $t_{sup} \in [0, t')$ satisfying $f(\boldsymbol{\gamma}(t_{sup}))=0$ and the Lie derivative of $f$ at $t_{sup}$ is non-positive. However, according to the premise, $\dot{f} > 0$ when $f=0$, which leads to a contradiction. Therefore, there cannot exist $t' \in [0,T]$ such that $f(\boldsymbol{\gamma}(t')) < 0$, i.e. for all $t \in [0,T]$, $f(\boldsymbol{\gamma}(t')) \ge 0$.

The proof is similar when considering the case where the invariant is $f > 0$.
\end{proof}

\subsection{Example and Proof for the Darboux Rule}

\begin{example}
In this example, we use the (dbx$_=$) rule. The invariant triple is:
\[ \inv{t\le 1}{\dot{x}=5x^2+3x, \dot{z}=5zx+3z, \dot{t}=1}{x+z=0}\]
Then, using the (dbx$_=$) rule with $g = 5x+3$, we obtain the following premise of the (dbx$_=$) rule:
\[ t \le 1 \rightarrow 5x^2+3x+5zx+3z=(5x+3)(x+z)\]
\end{example}

\begin{example}
In this example, we use (dbx$_\succcurlyeq$) rule. The invariant triple is:
\[ \inv{t \le 1}{\dot{x}=-x+1, \dot{t}=1}{x>0}\]
Then, by using the (dbx$_\succcurlyeq$) rule with which $g = -1$, we obtain the following premise of the (dbx$_\succcurlyeq$) rule:
\[ t \le 1 \rightarrow -x+1 \ge -x\]
\end{example}

\begin{proof}[of Darboux rule]
We consider each rule in turn.

\begin{description}
\item[Rule (dbx$_=$)] Consider a solution $\boldsymbol{\gamma}:[0,T]\rightarrow \mathbb{R}^n$ to the ODE, satisfying the condition that $\boldsymbol{\gamma}(t)$ satisfies $P$ for all $t \in [0,T]$, and $\boldsymbol{\gamma}(0)$ satisfies $f=0$. We wish to show that $f(\boldsymbol{\gamma}(t))=0$ for all $t \in [0,T]$. 
According to \cite{lics/PlatzerT18}, given $\dot{f}=gf$, $f=0$ stays invariant along the (analytic) solution to the ODE. Therefore, $\boldsymbol{\gamma}(t)$ satisfies $f=0$ for all $t \in [0,T]$.


\item[Rule (dbx$_\succcurlyeq$)]This rule can be derived by (dI), (dC), (dG) rule according to \cite{lics/PlatzerT18}.
\end{description}
\end{proof}

\subsection{Example and Proof for the Solution Rule}

\begin{example}
We illustrate rule (sln) with the following example:
\[
\hoare{x=0\wedge y=0}{\ode{\dot{x}=1,\dot{y}=x}{x<2}}{y=2}
\]
The solution $\boldsymbol{u}(\tau,x_0,y_0)$ is given by $(\tau+x_0,\frac{1}{2}\tau^2+x_0t+y_0)$. Hence $D(\boldsymbol{u}(\tau,x_0,y_0))$ is $\tau+x_0<2$, and $P'(x)$ is
\[ \forall t>0.\, (\forall 0\le\tau<t.\, \tau+x<2) \wedge \neg (t+x<2) \rightarrow \frac{1}{2}t^2+xt+y=2 \]
The two premises $\forall 0\le\tau<t.\, \tau+x<2$ and $\neg (t+x<2)$ of the implication are together equivalent to $t+x=2$. Hence, under the assumption of $x<2$ (so that $t>0\wedge t+x=2$ has a unique solution), the predicate $P'$ reduces to
\[ \frac{1}{2}(2-x)^2 + x(2-x) + y = 2 \]
So the precondition we derive from the solution rule is $(x<2\rightarrow \frac{1}{2}(2-x)^2+x(2-x)+y=2) \wedge (\neg x<2\rightarrow y=2)$. Since $x=0\wedge y=0$ implies this condition, we prove the Hoare triple.

\end{example}

\begin{proof}[of solution rule]
Given starting state $\boldsymbol{x}_0$, we distinguish two cases based on whether $\boldsymbol{x}_0$ satisfies domain $D$.
\begin{enumerate}
\item If $\boldsymbol{x}_0$ satisfies $D$, then there exists a solution $\boldsymbol{\gamma}:[0,T]\to\mathbb{R}^n$ of the ODE $\dot{\boldsymbol{x}}=e$ with initial state $\boldsymbol{x}_0$, such that $\boldsymbol{\gamma}(\tau)$ satisfies $D$ for $\tau\in[0,T)$ and $\boldsymbol{\gamma}(T)$ satisfies $\neg D$, and we wish to show that $\boldsymbol{\gamma}(T)$ satisfies $Q$. Since $D\rightarrow P'$ holds in the precondition, we get that $\boldsymbol{x}_0$ satisfies $P'$:
\[ \forall t>0.\, (\forall 0\le\tau<t.\, D(\boldsymbol{u}(\tau,\boldsymbol{x}_0)))\wedge \neg D(\boldsymbol{u}(t,\boldsymbol{x}_0)) \rightarrow Q(\boldsymbol{u}(t,\boldsymbol{x}_0
)).\]
Then let t = T. We get:
\[\forall 0\le\tau<T.\, D(\boldsymbol{u}(\tau,\boldsymbol{x}_0))\wedge \neg D(\boldsymbol{u}(T,\boldsymbol{x}_0)) \rightarrow Q(\boldsymbol{u}(T,\boldsymbol{x}_0
)).\]
Next, since $\boldsymbol{u}$ is the unique solution to the differential equation $\dot{x}=e$, we get that $\boldsymbol{\gamma}(\tau)=\boldsymbol{u}(\tau,\boldsymbol{x_0})$ for all $0\le \tau\le T$. 
Then we get:
\[\forall 0\le\tau<T.\, D(\boldsymbol{\gamma}(\tau))\wedge \neg D(\boldsymbol{\gamma}(T)) \rightarrow Q(\boldsymbol{\gamma}(T)).\]
From the fact that $\boldsymbol{\gamma}(\tau)$ satisfies $D$ for $\tau\in[0,T)$ and $\boldsymbol{\gamma}(T)$ satisfies $\neg D$, we get $\boldsymbol{\gamma}(T)$ satisfies $Q$, as desired.

\item If $\boldsymbol{x}$ does not satisfy $D$, then the ODE is not executed, and we wish to show that $\boldsymbol{x}$ satisfies $Q$. Since $\neg D\rightarrow Q$ holds in the precondition, $\boldsymbol{x}$ satisfies $Q$, as desired.
\end{enumerate}
\end{proof}

\section{Proof of Soundness of the Verification Condition Generation}
\label{sec:proof-vcg}

\begin{thm-vcg-sound}
\thmvcgsound
\end{thm-vcg-sound}
\begin{proof}
We proceed by structural induction on $\mathcal{T}$.
We assume an initial state in which 
$P_1, \dots, P_m$ hold.
We must show that after executing $\mathcal{T}$ on this state,
if $\mathcal{T}$ terminates,
$Q_1,\dots,Q_n$ hold.

The predicates $\tilde{P}_1,\ldots,\tilde{P}_{\tilde{m}}$ in \vctag{vc}
are the subset of the
preconditions $P_1,\ldots,\allowbreak P_m$ whose variables are never reassigned in $\mathcal{T}$.
They hold in the initial state by assumption and
since their variables never change, they hold in all states in the course of the program $\mathcal{T}$.
Thus, for the purposes of this proof, we can
assume that $\tilde{P}_1,\ldots,\tilde{P}_{m'}$ always hold.
Therefore, we can also assume that the elements of $\vc(\mathcal{T}, \{Q_1,\dots,Q_n\})$ 
always hold.

If $T$ is any command except loop or ODE,
the correctness of the Hoare triple is easy to see
using the induction hypotheses and the VCs emerging via
\vctag{pre}
from
\vctag{pre-skip}, \vctag{pre-assn}, \vctag{pre-seq}, \vctag{pre-if},
\vctag{pre-else}, \vctag{pre-choice}, and \vctag{pre-nassn}.

If $\mathcal{T}$ is a loop $\loopcmd{\mathcal{S}}$ 
with invariants $I_1, \dots, I_n$,
we proceed as follows.
Since $\pre(\mathcal{T}, Q_1,\dots,Q_n) = \{I_1, \dots I_n\}$ by \vctag{pre-loop}, 
we have
$P_1 \land \dots \land P_m \rightarrow I_i 
\in \mathrm{VC}(\hoare{P_1, \dots, P_m}{\mathcal{T}}{Q_1,\dots,Q_n})$ \vctag{pre} for each $I_i$.
Thus, all $I_i$ hold in the initial state.
The induction hypothesis for $\mathcal{S}$,
in conjunction with
\vctag{vc-loop-body} and 
\vctag{vc-loop-maintain},
imply $\hoare{I_1, \dots, I_n}{\mathcal{S}}{I_1,\dots,I_n}$.
Hence, the $I_i$ not only hold in the initial state, but
still after executing $\mathcal{S}$ arbitrarily often.
Finally, by \vctag{vc-loop-exit}, $Q_1,\dots,Q_n$ hold in the final state.

If $\mathcal{T}$ is a an ODE $\ode{\dot{\boldsymbol{x}}=\boldsymbol{e}}{D}$
with invariants $I_1, \dots, I_k$, we proceed as follows.

If the ODE is annotated to use the solution rule, 
we apply it using the VC stemming from the precondition \vctag{pre-sln}, and we are done.

Otherwise, invariants have been specified, or the default invariant $I_1 = \mathsf{true}$ has been set. If additionally ghost variables are specified,
we employ the (dG) rule; if no ghost variables are specified we employ the (dW) rule.
The VCs stemming from \vctag{pre-dWG-skip}
and \vctag{pre-dW-init} or \vctag{pre-dG-init}
show that the rule (dW) or (dG) is applicable to our
desired Hoare triple $\hoare{P_1, \dots, P_m}{\mathcal{T}}{Q_1,\dots,Q_n}$.
The condition
\vctag{vc-dWG-exec} 
discharges the right premise of the (dW) or (dG) rule.
It remains the left premise
\[\inv{\overline{D}}{\dot{\boldsymbol{x}}=\boldsymbol{e},
\dot{\boldsymbol{y}}=\boldsymbol{f}(\boldsymbol{x},\boldsymbol{y})}{I_1\land\cdots \land I_k}\]
(possibly without $\boldsymbol{y}$ in the case of the (dW) rule)

Next, we repeatedly apply the (dC) rule to isolate each invariant $I_i$.
Then it remains to show
\[\inv{\overline{D} \land I_1 \land \dots \land I_{i-1}}{\dot{\boldsymbol{x}}=\boldsymbol{e},
\dot{\boldsymbol{y}}=\boldsymbol{f}(\boldsymbol{x},\boldsymbol{y})}{I_i}\]

We distinguish the different rules that could be in the annotation:
\begin{itemize}
    \item If the user did not specify a rule and $I_i$ is either $\mathsf{true}$
    or $\mathsf{false}$, what we need to show is obvious because 
    $\mathsf{true}$ and $\mathsf{false}$ are always invariants.
    \item dI: We apply the (dI) rule. Its premise is justified by \vctag{vc-dI1}, \vctag{vc-dI2}, or \vctag{vc-dI3}.
    \item dbx: We apply the (dbx) rule. Its premise is justified by \vctag{vc-dbx1} or \vctag{vc-dbx2}.
    \item bc: We apply the (bc) rule. Its premise is justified by \vctag{vc-bc}.
\qed
\end{itemize}
\end{proof}

\section{Simulink/Stateflow Models}
\label{sec:simulink-models}

\subsection{Model for the Cruise Control System Example}
Simulink models for the cruise control system are shown in Fig.~\ref{fig:top}, \ref{fig:car}, and \ref{fig:control}.
\begin{figure}[htbp]
\includegraphics[width=\textwidth]{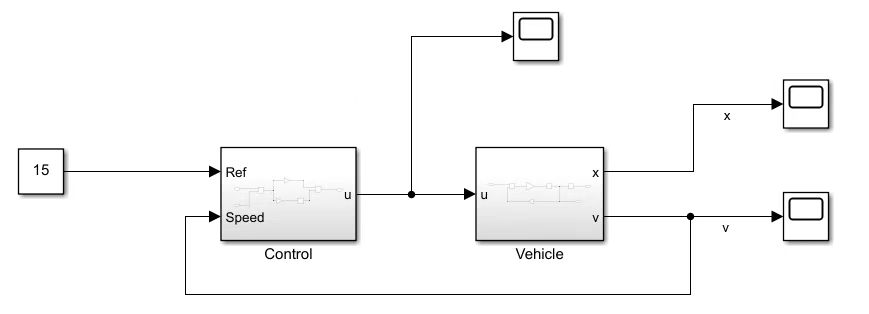}
\caption{Simulink Model of Cruise Control System.} \label{fig:top}
\end{figure}

\begin{figure}[htbp]
\includegraphics[width=\textwidth]{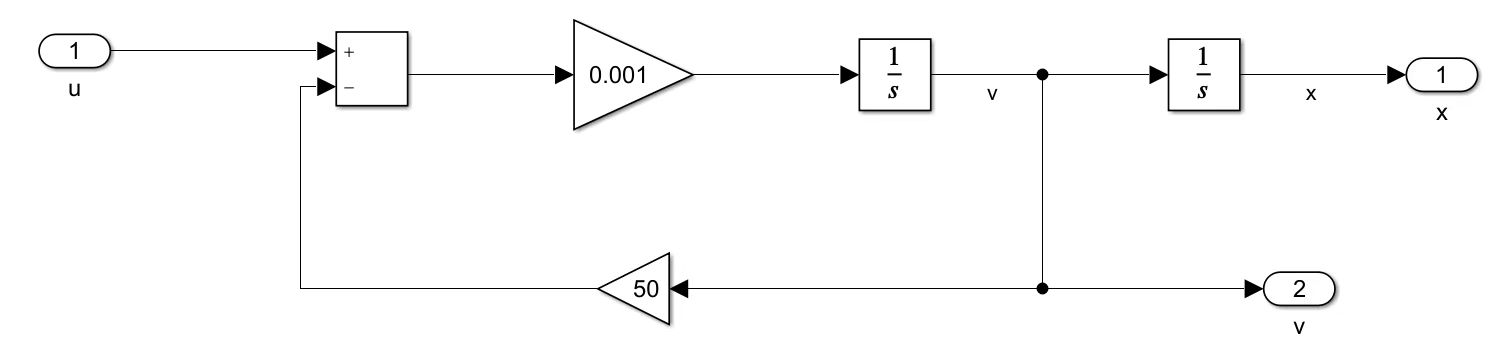}
\caption{Simulink Model of Vehicle Dynamics.} \label{fig:car}
\end{figure}

\begin{figure}[htbp]
\includegraphics[width=\textwidth]{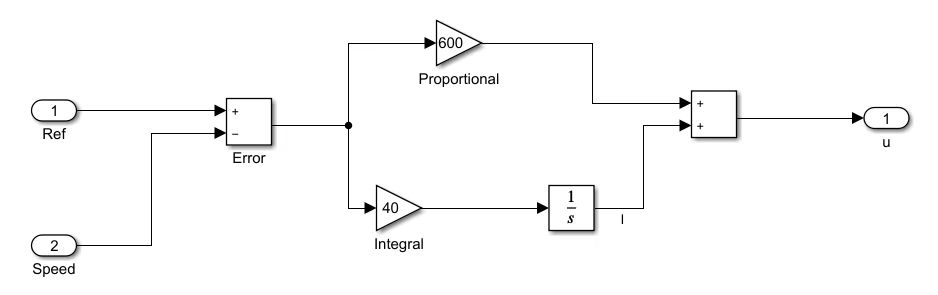}
\caption{Simulink Model of PI Controller.} \label{fig:control}
\end{figure}

\subsection{Model and Code for the Sawtooth Wave Example}
The Stateflow model of the sawtooth wave example is shown in Fig.~\ref{fig:sawtooth}.

\begin{figure}[htbp]
\centering
\includegraphics[scale=0.5]{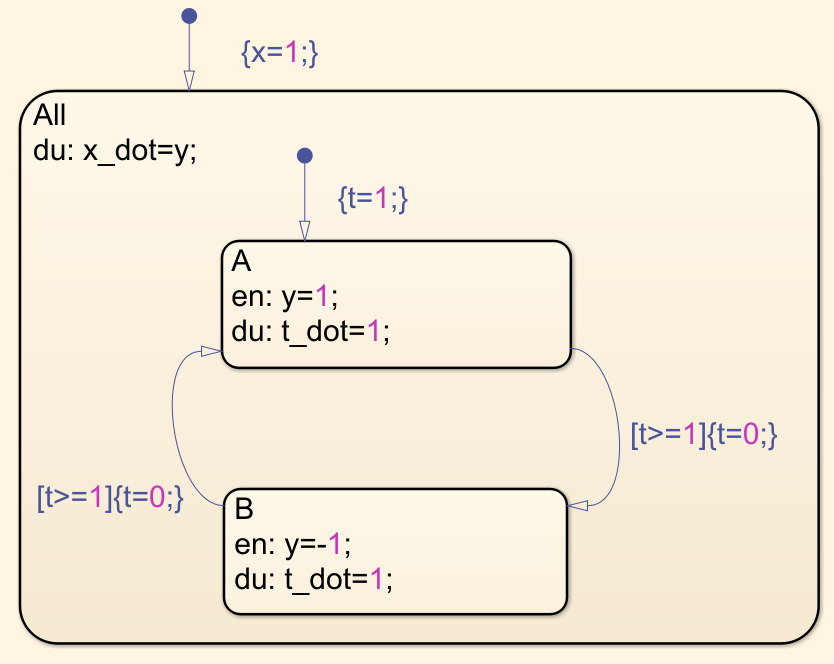}
\caption{Stateflow Model of Sawtooth Wave.} \label{fig:sawtooth}
\end{figure}

Fig.~\ref{fig:sawtooth-code} shows the HCSP code that is automatically translated from the Stateflow diagram for the sawtooth example, together with annotations of pre- and postconditions, invariants, and ODE proof rules used to verify the property that $x$ remains in the interval $[0,1]$ for this model.

\begin{figure}[htbp]
\centering
\begin{lstlisting}[language=HHLPy, columns=flexible]
pre [true];

  Chart_All_A := 1;
  Chart_All_B := 2;
  Chart_All := 0;
  x := 1;
  Chart_st := Chart_All;
  t := 1;
  Chart_All_st := Chart_All_A;
  y := 1;

  {
    Chart_ret := 0;
    if (Chart_All_st == Chart_All_A) {
      Chart_All_A_done := 0;
      if (t >= 1) {
        t := 0;
        if (Chart_All_st == Chart_All_A) {
          Chart_All_st := Chart_All_B;
          y := -1;
          Chart_All_A_done := 1;
        }
      }
      Chart_ret := Chart_All_A_done;
    } else if (Chart_All_st == Chart_All_B) {
      Chart_All_B_done := 0;
      if (t >= 1) {
        t := 0;
        if (Chart_All_st == Chart_All_B) {
          Chart_All_st := Chart_All_A;
          y := 1;
          Chart_All_B_done := 1;
        }
      }
      Chart_ret := Chart_All_B_done;
    }
      {x_dot = y, t_dot = 1 & t < 1} solution;   
  }*
  invariant [Chart_All_st == Chart_All_A -> x == 1]
            [Chart_All_st == Chart_All_B -> x == 0]
            [Chart_All_st == Chart_All_A || Chart_All_st == Chart_All_B]
            [t == 1];

post [x >= 0][x <= 1];


\end{lstlisting}
\caption{HCSP code translated from the Stateflow diagram for the sawtooth example, with annotations.}
\label{fig:sawtooth-code}
\end{figure}
\end{appendix}

\end{document}